\documentclass[11pt]{article}
\usepackage{subfig,amsmath,amssymb,amsthm,graphicx,tikz,bm,geometry,setspace}
\usepackage[authoryear,longnamesfirst]{natbib}
\bibliographystyle{ecta}
\newtheorem{lemma}{Lemma}
\newtheorem{theorem}{Theorem}
\newtheorem{proposition}{Proposition}
\newtheorem{corollary}{Corollary}

\newtheorem{assumption}{Assumption}
\newtheorem{remark}{Remark}
\newtheorem{algorithm}{Algorithm}

\onehalfspacing
\geometry{letterpaper,left=1in,right=1in,top=1in,bottom=1in}
\title{Inference based on Kotlarski's Identity\thanks{The first arXiv date: August 28, 2018. We would like to thank numerous scholars, seminar participants at Aarhus University, Duke University, KU Leuven, Tinbergen Institute, Universidad Carlos III de Madrid, University of California Davis, University of California Los Angeles, University of California San Diego, University of Illinois Urbana Champtaign, University of Oxford, University of Texas Austin, and University of Wisconsin Madison, and conference participants at AMES 2019, NASM 2019, and New York Camp Econometrics XVI for very helpful comments. The usual disclaimer applies.}
}
\author{
Kengo Kato\thanks{K. Kato: Department of Statistics and Data Science, Cornell University, 1194 Comstock Hall, Ithaca, NY 14853. Email: kk976@cornell.edu}
\and 
Yuya Sasaki\thanks{Y. Sasaki: Department of Economics, Vanderbilt University, VU Station B \#351819, 2301 Vanderbilt Place, Nashville, TN 37235-1819. Email: yuya.sasaki@vanderbilt.edu}
\and 
Takuya Ura\thanks{T. Ura: Department of Economics, University of California, Davis, 1151 Social Sciences and Humanities, Davis, CA 95616. Email: takura@ucdavis.edu}
}
\date{}
\begin{document}
\maketitle
\begin{abstract}
Kotlarski's identity has been widely used in applied economic research.
However, how to conduct inference based on this popular identification approach has been an open question for two decades.
This paper addresses this open problem by constructing a novel confidence band for the density function of a latent variable in repeated measurement error model. 
The confidence band builds on our finding that we can rewrite Kotlarski's identity as a system of linear moment restrictions. 
The confidence band controls the asymptotic size uniformly over a class of data generating processes, and it is consistent against all fixed alternatives. Simulation studies support our theoretical results.

\begin{description}\item {\bf Keywords:} deconvolution, Kotlarski's identity, measurement error, uniform confidence band.
\end{description}
\end{abstract}

\section{Introduction}
Empirical researchers are often interested in recovering features of unobserved variables in economic models.
Kotlarski's identity \citep*{kotlarski1967} -- see also \citet{rao:1992} -- is one of the most popular tools used to identify probability density functions of unobserved latent variables.
Since its first introduction to econometrics by \citet*{li/vuong:1998}, Kotlarski's identity has been widely used in economics when data admit repeated measurements.
Examples of research topics that use Kotlarski's identity include, but are not limited to, empirical auctions \citep[e.g.,][]{LiPerrigneVuong2000,krasnokutskaya:2011}, income dynamics \citep[e.g.,][]{bonhomme/robin:2010}, and labor economics \citep[e.g.,][]{cunha/heckman/navarro:2005,cunha/heckman/schennach:2010,bonhomme/sauder:2011,kennan/walker:2011}.
In these applications, researchers are interested in identifying the probability density function $f_X$ of a latent variable $X$ among others.
The variable $X$ of interest is not observed in data, but two measurements $(Y_1,Y_2)$ are available in data with classical errors, $U_1 = Y_1-X$ and $U_2 = Y_2-X$.
Kotlarski's identity is a nonparametric identifying restriction for the probability density function $f_X$ of $X$ implied by this setup.

The existing econometric literature on Kotlarski's identity focuses on identification and consistent estimation of $f_X$ and related objects \citep[e.g.,][]{li/vuong:1998,Li2002,schennach:2004,Schennach2004et,Schennach2008,bonhomme/robin:2010,Evdokimov2010,ZindeWalsh2014,SongSchennachWhite2015,FirpoGalvaoSong2017} -- also see surveys on this literature by \citet{Chen/Hong/Nekipelov:2011} and \citet{Schennach:2016}.
On the other hand, satisfactory inference methods for $f_X$ are missing in this literature -- in fact, even the \textit{sharp} rate of convergence is unknown for the estimators based on Kotlarski's identity under \textit{unrestrictive} assumptions, and hence a limit distribution result is unavailable under such assumptions. 
Indeed some empirical papers implement nonparametric bootstrap without a theoretical guarantee.
In light of the demands by the empirical researchers for an inference method, and given the current unavailability of theoretically supported methods of inference, we propose a method of inference based on Kotlarski's identity in this paper.

This paper proposes an inference method based on Kotlarski's identity by developing a confidence band for $f_X$.
Our construction of confidence bands works as follows.
First, we derive linear complex-valued moment restrictions based on Kotlarski's identity.
Second, we let the Hermite polynomial sieve \citep[cf.][]{chen:2007} approximate unknown probability density functions.
Third, for a given sieve dimension and for a given class of probability density functions, we compute a bias bound for the linear complex-valued moment restrictions, and slack the linear complex-valued moment restrictions by this bias bound.
Fourth, applying \citet*{chernozhukov/chetverikov/kato:2017}, we compute the uniform norm of the self-normalized process of the slacked linear complex-valued moment restrictions as the test statistic for each point in a set of sieve coefficients.
Fifth, inverting this test statistic in the spirit of \cite{anderson/rubin:1949} yields a confidence set of sieve approximations to possible probability density functions.
Sixth, for a given sieve dimension and for a given class for probability density functions, we compute a bias bound for sieve approximations of probability density functions, and the desired confidence band is obtained by uniformly enlarging the set of sieve approximations by this bias bound.

The process of identifying $f_X$ in additive measurement error models is called \textit{deconvolution} -- for solving convolution integral equations.
There are a number of existing papers on nonparametric inference in deconvolution.
\citet{bissantz/lutz/holzmann/munk:2007}, \citet{bissantz/holtzmann:2008}, \citet{es/gugushvili:2008}, \citet{lounici2011}, and \citet{schmidt-hieber2013} develop uniform confidence bands for $f_X$ under the assumption of known error distributions.\footnote{These paper are based on the literature on deconvolution under known error distribution \citep[e.g.,][]{carroll/hall:1988,stefanski/carroll:1990,fan:1991b,carrasco2011spectral}. \cite{fan:1991a} develops a point-wise asymptotic inference result in this framework.}
In most economic applications, however, it is not plausible to assume that the error distributions are known.
More recently, \citet{kato/sasaki:2018} and \citet{adusumilli/otsu/whang:2017} develop uniform confidence bands for $f_X$ and the distribution function, respectively, without assuming that the error distributions are known, but they both assume that at least one error distribution is symmetric.\footnote{These paper are based on the literature on deconvolution under unknown error distribution with auxiliary data or symmetric error distributions \citep[e.g.,][]{diggle/hall:1993,horowitz/markatou:1996,neumann:1997,efromovich:1997,delaigle/hall/meister:2008,johannes:2009,comte/lacour:2011,delaigle/hall:2015}.}
Kotlarski's identity is a powerful device for new identification results which require neither the known error distribution assumption nor the symmetric error distribution assumption.
This useful feature attracts many economic applications including those listed above, but no econometrician has developed a method of inference in this framework for twenty years ever since its first introduction by \citet*{li/vuong:1998} until our present paper.

It it not surprising that such an inference method has been missing for long in the literature, given the technical difficulties of the problem.
Deconvolution is an ill-posed inverse problem, and inference under this problem is known to be challenging -- see \citet{bissantz/lutz/holzmann/munk:2007,bissantz/holtzmann:2008,lounici2011,horowitz/lee:2012,hall/horowitz:2013,schmidt-hieber2013,adusumilli/otsu/whang:2017,kato/sasaki:2017b,kato/sasaki:2018,babii:2018,chen/christensen:2018} for existing papers developing confidence bands in ill-posed inverse problems for example.
We take a robust inference approach \`a la \citet{anderson/rubin:1949}, and directly work with the moment restrictions based on Kotlarski's identity.
A positive side product of taking this approach is that we do not need to assume the non-vanishing characteristic functions (i.e., we do not need the completeness), which is commonly assumed for nonparametric identification or inversion.


It is also worth mentioning that we chose to use the Hermite polynomial sieve among other sieves in this paper.
The Hermite polynomial sieve has been in fact already known in the literature to be useful to approximate ``smooth density with unbounded support''  \citep{chen:2007} -- also see her discussion of \cite{gallant/nychka:1987} therein.
In addition to this known advantage, we also find this sieve particularly useful for the deconvolution problem.
Note that the deconvolution problem involves applications of the Fourier transform operation and the inverse Fourier transform operation.
To our convenience, the Hermite functions are eigen-functions of the Fourier transform operator.
While we deal with simultaneous restrictions in terms of density and characteristic functions, we can use the Hermite polynomial sieve to approximate both the density and characteristic functions without having to apply the Fourier transform or the Fourier inverse because of the eigen-function property.
This convenient property saves computational time and resources as costly numerical integration within each iteration of a numerical optimization routine would be necessary if any other sieve were used. 
Furthermore, we find that a couple of properties of the Hermite functions (namely the Schr\"odinger equation for a harmonic oscillator and a pair of recursive equations) can be exploited to obtain informative bias bounds of the Hermite polynomial sieve, which in turn contributes to informative inference we establish in this paper.


The rest of the paper is organized as follows.
Section \ref{sec:moment_restrictions} derives linear complex-valued moment restrictions based on Kotlarski's identity.
Section \ref{sec:construction} presents how to compute the confidence band.
Section \ref{sec:properties} presents asymptotic properties of the confidence band.
Section \ref{sec:practical} discusses practical considerations.
Section \ref{sec:simulation} illustrates simulation studies.
The paper concludes in Section \ref{sec:conclusion}.
All mathematical derivations and details are delegated to the appendix.

\section{Linear Complex-Valued Moment Restrictions}\label{sec:moment_restrictions}
Consider the repeated measurement model
 \begin{equation}
 \label{eq:main}
\begin{cases}
&Y_1 = X + U_1, \\
&Y_2 = X + U_2,
\end{cases}
\end{equation}
where $Y_1$ and $Y_2$ are observed, but none of $X$, $U_1$, or $U_2$ is observed. 
We are interested in making inference on the probability density function $f_X$ of $X$. 
We equip this model with the following assumption.

\begin{assumption}\label{a:continuous_independence}${}$
\begin{enumerate}
\item[(i)] 
$X, \ U_1$, and $U_2$ are continuous random variables with finite first moments, and $U_1$ has mean zero.  
\item[(ii)]
$X, \ U_1$, and $U_2$ are mutually independent. 
\end{enumerate}
\end{assumption}

This assumption is standard in the literature on identification and estimation based on Kotlarski's identity \citep[e.g.,][]{li/vuong:1998}.
In fact, the existing literature imposes an additional assumption, namely the identification condition (non-vanishing characteristic function or the completeness) -- see Lemma \ref{lemma:kotlarski} ahead for a specific condition.
We do not invoke such an identification assumption for the purpose of identification-roust inference -- see Remark \ref{remark:non_vanishing} ahead for further details.

We now fix basic notations.
In what follows, $\mathbb{E}_P$ and $\mathbb{V}_P$ denote the expectation and variance operators, respectively, with respect to a joint distribution $P$ of $(Y_1,Y_2)$.
Analogously, $\mathbb{E}_n$ and $\mathbb{V}_n$ denote the expectation and variance operators, respectively, with respect the empirical distribution of $n$ independent copies of $(Y_1,Y_2)$.
We let $i=\sqrt{-1}$ denote the imaginary unit.
For the set of absolutely integrable functions, $\mathcal{L}^1$, we define the Fourier transform $\mathcal{F}$ on $\mathcal{L}^{1}$ by $[\mathcal{F}f](t)=\int_{-\infty}^{\infty} e^{itx}f(x)dx$, and its inverse transform is $[\mathcal{F}^{-1}\phi](x)=\frac{1}{2\pi}\int_{-\infty}^{\infty} e^{-itx}\phi(t)dt$ -- see \citet{folland2007}.
In light of Assumption \ref{a:continuous_independence} (i), we let $f_X$, $f_{U_1}$, and $f_{U_2}$ denote the density functions of $X$, $U_1$, and $U_2$, respectively.
Further, we denote the characteristic functions of them by $\phi_X=\mathcal{F}f_X$, $\phi_{U_1}=\mathcal{F}f_{U_1}$, and $\phi_{U_2}=\mathcal{F}f_{U_2}$.
We first review the existing result of the identification.

\begin{lemma}[Kotlarski's Identity]\label{lemma:kotlarski}
For every joint distribution $P$ of $(Y_1,Y_2)$ satisfying Assumption \ref{a:continuous_independence} for (\ref{eq:main}) and 
$\mathbb{E}_{P}\left[e^{itY_2}\right] \neq 0$ for all $t \in \mathbb{R}$, 
\begin{align}\label{eq:kotlarski_identity}
\phi_X(t) = \exp\left( \int_0^t \frac{i \mathbb{E}_P\left[Y_1 e^{i\tau Y_2}\right]}{\mathbb{E}_P\left[e^{i\tau Y_2}\right]} d\tau \right).
\end{align}
\end{lemma}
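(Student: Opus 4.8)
The plan is to show that the integrand collapses to the logarithmic derivative $\phi_X'(\tau)/\phi_X(\tau)$ of the target characteristic function, after which the identity follows by integrating and exponentiating. First I would exploit mutual independence (Assumption \ref{a:continuous_independence}(ii)) to factor the denominator as $\mathbb{E}_P[e^{i\tau Y_2}] = \phi_X(\tau)\phi_{U_2}(\tau)$. For the numerator, substituting $Y_1 = X + U_1$ and again using independence gives $\mathbb{E}_P[Y_1 e^{i\tau Y_2}] = \mathbb{E}_P[X e^{i\tau X}]\phi_{U_2}(\tau) + \mathbb{E}_P[U_1]\,\phi_X(\tau)\phi_{U_2}(\tau)$, where I have used $U_1 \perp (X,U_2)$ to split off the $U_1$ factor. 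The mean-zero restriction on $U_1$ in Assumption \ref{a:continuous_independence}(i) eliminates the second term, leaving $\mathbb{E}_P[Y_1 e^{i\tau Y_2}] = \mathbb{E}_P[X e^{i\tau X}]\phi_{U_2}(\tau)$.

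Second, I would use the finite-first-moment condition of Assumption \ref{a:continuous_independence}(i) to differentiate under the expectation (justified by dominated convergence), so that $\phi_X$ is continuously differentiable with $\phi_X'(\tau) = i\,\mathbb{E}_P[X e^{i\tau X}]$, i.e.\ $\mathbb{E}_P[X e^{i\tau X}] = -i\,\phi_X'(\tau)$. Combining the two displays, the integrand becomes
\begin{align*}
\frac{i\,\mathbb{E}_P[Y_1 e^{i\tau Y_2}]}{\mathbb{E}_P[e^{i\tau Y_2}]} = \frac{i(-i)\,\phi_X'(\tau)\phi_{U_2}(\tau)}{\phi_X(\tau)\phi_{U_2}(\tau)} = \frac{\phi_X'(\tau)}{\phi_X(\tau)}.
\end{align*}
The non-vanishing condition $\mathbb{E}_P[e^{i\tau Y_2}] = \phi_X(\tau)\phi_{U_2}(\tau) \neq 0$ guarantees that $\phi_X(\tau) \neq 0$ for every $\tau$, so both the ratio and the logarithmic derivative are well defined along the whole integration path.

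Finally, I would integrate from $0$ to $t$. Because $\phi_X$ is continuous and nowhere zero with $\phi_X(0) = 1$, one can select a continuous branch of the complex logarithm $\log\phi_X$ along $[0,t]$ normalized by $\log\phi_X(0) = 0$, and the fundamental theorem of calculus then gives $\int_0^t \phi_X'(\tau)/\phi_X(\tau)\,d\tau = \log\phi_X(t)$. Exponentiating yields the claimed formula. The only delicate point is this last step: since the integrand is complex-valued, one must verify that integrating the logarithmic derivative recovers precisely the single-valued continuous branch of $\log\phi_X$ pinned down by $\phi_X(0) = 1$, rather than a competing branch differing by a multiple of $2\pi i$. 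This is exactly where the non-vanishing assumption is indispensable, as it prevents $\phi_X$ from winding around the origin and thereby guarantees that the continuous antiderivative is unambiguous.
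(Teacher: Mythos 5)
Your proof is correct, but note that the paper itself never proves Lemma \ref{lemma:kotlarski}: it is quoted as a known result from \citet{kotlarski1967} and \citet{rao:1992}, so the only in-paper argument to compare against is the proof of Theorem \ref{theorem:moment_condition}. Your first two steps coincide exactly with that proof: both establish $\mathbb{E}_P[e^{i\tau Y_2}]=\phi_X(\tau)\phi_{U_2}(\tau)$ and $i\,\mathbb{E}_P[Y_1e^{i\tau Y_2}]=\phi_X^{(1)}(\tau)\phi_{U_2}(\tau)$ via mutual independence, the mean-zero restriction on $U_1$, and differentiation under the expectation justified by the finite first moment (the paper reaches the numerator identity through $\partial_{t_1}\mathbb{E}_P[\exp(it_1Y_1+itY_2)]|_{t_1=0}$, which is the same computation). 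Where you diverge is instructive: the paper \emph{cross-multiplies} these two identities to get the linear moment restriction (\ref{eq:moment_condition}), deliberately avoiding division and integration so that the non-vanishing assumption is never needed — that is the methodological point of the paper. You instead divide, obtaining $\phi_X'(\tau)/\phi_X(\tau)$, and integrate, which is the classical route to (\ref{eq:kotlarski_identity}) and is exactly what the hypothesis $\mathbb{E}_P[e^{itY_2}]\neq 0$ buys.

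One refinement on your final step: your worry about competing branches is legitimate, but your explanation that non-vanishing ``prevents $\phi_X$ from winding around the origin'' is not quite right — a non-vanishing path can wind (e.g.\ $\tau\mapsto e^{i\tau}$), and the integral of the logarithmic derivative simply records any winding in its imaginary part. The cleanest way to close the step without any branch bookkeeping is to set
\begin{equation*}
g(t)=\phi_X(t)\exp\left(-\int_0^t \frac{\phi_X'(\tau)}{\phi_X(\tau)}\,d\tau\right),
\end{equation*}
observe that $g'(t)=0$ for all $t$ since $\phi_X$ is continuously differentiable and nowhere zero, and conclude $g(t)=g(0)=\phi_X(0)=1$, which is precisely (\ref{eq:kotlarski_identity}). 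Non-vanishing is used only to make the integrand well defined and continuous along $[0,t]$; no appeal to a distinguished logarithm is required.
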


This lemma presents Kotlarski's identity due to \cite*{kotlarski1967} -- see also \citet{rao:1992}.
Since it is stated as a lemma, Kotlarski's identity is also known as Kotlarski's lemma or the lemma of Kotlarski in the econometrics literature.
\citet{li/vuong:1998} first introduced it into econometrics and statistics, followed by a series of extensions \citep{Li2002,schennach:2004,bonhomme/robin:2010,Evdokimov2010}.
Some of these extensions relax the assumptions for identification and estimation in various ways.
We do not need to rely on the prototypical assumptions for our purpose of inference, even though they are stated in Lemma \ref{lemma:kotlarski} for convenience of a concise review.
Lemma \ref{lemma:kotlarski} shows that the characteristic function $\phi_X$ of $X$ is explicitly identified by the joint distribution of $(Y_1,Y_2)$.
Under the additional assumption of absolutely integrable characteristic function $\phi_X$, the formula $f_X=\mathcal{F}^{-1}\phi_X$ in turn yields the identification of the probability density function $f_X$ of $X$.

Uniform convergence rates for the estimator of $f_X$ based on Kotlarski's identity are discovered in the existing literature \citep{li/vuong:1998,Li2002,schennach:2004,bonhomme/robin:2010,Evdokimov2010}, but the \textit{sharp} rates under \textit{unrestrictive} assumptions are still unknown.
In particular, limit distribution results under such assumptions are still unknown in the existing literature.
This paper does not aim to derive a non-degenerate limit distribution for any estimator, but it aims to conduct an inference on $f_X$. 
With this said, our proposed inference does not rely on an explicit identifying formula.
We argue that rewriting Kotlarski's lemma in terms of moment restrictions suffices and serves even more conveniently for the sake of conducting inference.

\begin{theorem}[Linear Complex-Valued Moment Restrictions]\label{theorem:moment_condition}
For every joint distribution $P$ of $(Y_1,Y_2)$ satisfying Assumption \ref{a:continuous_independence} for (\ref{eq:main}), \begin{equation}\label{eq:moment_condition}
\mathbb{E}_P\left[\left(iY_1\phi_X(t)-\phi_X^{(1)}(t)\right)\exp(itY_2)\right]=0
\end{equation}
holds for every real $t$, where $\mathbb{E}_P$ is the expectation under $P$.
\end{theorem}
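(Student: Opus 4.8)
The plan is to substitute the measurement equations $Y_1 = X + U_1$ and $Y_2 = X + U_2$ directly into the moment expression, expand using mutual independence to factorize the relevant expectations, and then exhibit an explicit cancellation between the two summands. The guiding observation is that $\phi_X(t)$ and $\phi_X^{(1)}(t)$ enter only as deterministic multipliers that pull out of $\mathbb{E}_P$, so the content of the claim reduces to a single identity relating $\mathbb{E}_P[Y_1 e^{itY_2}]$ and $\mathbb{E}_P[e^{itY_2}]$ through the characteristic function of $X$.

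First I would handle the two summands separately. For the second, $\mathbb{E}_P[\phi_X^{(1)}(t)e^{itY_2}] = \phi_X^{(1)}(t)\,\mathbb{E}_P[e^{itY_2}]$, and since $Y_2 = X + U_2$ with $X$ independent of $U_2$, this factors as $\phi_X^{(1)}(t)\phi_X(t)\phi_{U_2}(t)$. For the first, I would write $e^{itY_2} = e^{itX}e^{itU_2}$ and split $Y_1 = X + U_1$, giving $\mathbb{E}_P[Y_1 e^{itY_2}] = \mathbb{E}_P[X e^{itX} e^{itU_2}] + \mathbb{E}_P[U_1 e^{itX} e^{itU_2}]$. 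Mutual independence factors each term: by Assumption \ref{a:continuous_independence}(ii) the second equals $\mathbb{E}_P[U_1]\,\phi_X(t)\phi_{U_2}(t)$, which vanishes because $U_1$ has mean zero under Assumption \ref{a:continuous_independence}(i); the first equals $\mathbb{E}_P[X e^{itX}]\,\phi_{U_2}(t)$.

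The key step is then to recognize $\mathbb{E}_P[X e^{itX}]$ as a derivative of the characteristic function. Differentiating $\phi_X(t) = \mathbb{E}_P[e^{itX}]$ under the integral sign yields $\phi_X^{(1)}(t) = i\,\mathbb{E}_P[X e^{itX}]$, so that $\mathbb{E}_P[X e^{itX}] = -i\,\phi_X^{(1)}(t)$. Substituting back, the first summand of the moment expression becomes $i\phi_X(t)\cdot(-i\phi_X^{(1)}(t))\phi_{U_2}(t) = \phi_X(t)\phi_X^{(1)}(t)\phi_{U_2}(t)$, which coincides exactly with the second summand. The two cancel, and the moment equals zero for every real $t$.

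The only point requiring care is the justification of differentiation under the integral sign used to write $\phi_X^{(1)}(t) = i\,\mathbb{E}_P[X e^{itX}]$; this is where the finite-first-moment hypothesis in Assumption \ref{a:continuous_independence}(i) does the work, since $|X e^{itX}| = |X|$ is integrable and supplies the dominating function needed to interchange differentiation and expectation. Everything else is bookkeeping with the independence structure and the mean-zero normalization of $U_1$, so I expect no substantive obstacle beyond confirming this integrability.
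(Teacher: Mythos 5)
Your proof is correct and is essentially the paper's argument: both reduce the claim to the two factorizations $\mathbb{E}_P[e^{itY_2}]=\phi_X(t)\phi_{U_2}(t)$ and $\mathbb{E}_P[iY_1e^{itY_2}]=\phi_X^{(1)}(t)\phi_{U_2}(t)$, with the mean-zero condition on $U_1$ eliminating the cross term, after which the two summands cancel identically. The only cosmetic difference is that the paper derives the second identity by differentiating the factorized bivariate characteristic function $\phi_X(t_1+t_2)\phi_{U_1}(t_1)\phi_{U_2}(t_2)$ at $t_1=0$ (where $\mathbb{E}[U_1]=0$ enters as $\phi_{U_1}^{(1)}(0)=0$), whereas you expand $Y_1=X+U_1$ inside the expectation and use $\mathbb{E}_P[Xe^{itX}]=-i\phi_X^{(1)}(t)$, justified by the same dominated-convergence argument the paper invokes implicitly when noting that finite first moments yield continuously differentiable characteristic functions.
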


A proof is provided in Appendix \ref{sec:theorem:moment_condition}.

\begin{remark}\label{remark:non_vanishing}
Taking a few more steps beyond the claim in Theorem \ref{theorem:moment_condition} will lead us to the identification result of Lemma \ref{lemma:kotlarski} under the additional assumption of the invertibility or non-vanishing characteristic functions (also known as the completeness) -- see \citet{dhaultfoeuille:2011}.\footnote{\citet{evdokimov2012some} provides a relaxed assumption for the identification.}
For the purpose of inference, however, it is not essential to solve the inverse problem, and thus we stop short of obtaining the explicit formula (\ref{eq:kotlarski_identity}), and only use the moment condition (\ref{eq:moment_condition}).
This idea is analogous to that of \citet{santos2011instrumental,santos2012inference}, where robust inference for functional parameters is conducted without assuming the completeness.
\end{remark}

\section{Construction of the Confidence Band}\label{sec:construction}
Our objective is to construct a confidence band for the probability density function $f_X$ of $X$ on an interval $I \subset \mathbb{R}$. 
The construction procedure is based on the linear complex-valued moment restriction (\ref{eq:moment_condition}).
Throughout, we focus on the set of probability density functions given by
\begin{equation*}
\mathcal{L} \subset \left\{ f \in \mathcal{L}^1 \cap \mathcal{L}^2 \ : \ f \text{ is a probability density function and } \mathcal{F}f \in \mathcal{L}^1 \right\}.
\end{equation*}
For this set of candidate probability density functions, we use $\mathcal{L}^1$ for applying the Fourier transform and the inverse, whereas $\mathcal{L}^2$ is used to approximate $\mathcal{L}$ by an orthonormal basis $\Psi=\{\psi_j \ : \ j=0,1,\ldots\}$ of $\mathcal{L}^2$ -- see Section \ref{sec:practical} for the example of the Hermite basis.

We use a $(q+1)$-dimensional sieve basis $\{\psi_0,\ldots,\psi_q\} \subset \Psi = \{\psi_j : j =0,1,\ldots\}$, with $\psi_j \in \mathcal{L}^1 \cap \mathcal{L}^2$ and $\mathcal{F} \psi_j \in \mathcal{L}^1$ for each $j \in \mathbb{N}$, to approximate the probability density function $f_X$. 
Let $\Theta^{q+1}\subset \mathbb{R}^{q+1}$ be a compact set, and write $\boldsymbol{\psi}=(\psi_0,\dots,\psi_q)^T$.
With a uniform tolerance level $\eta > 0$, each $f \in \mathcal{L}$ is approximated by 
$
x \mapsto \boldsymbol{\psi}(x)^T \boldsymbol{\theta}
$
for some $\boldsymbol{\theta}=(\theta_0,\ldots,\theta_q)^T \in \Theta^{q+1}$, i.e.,
$
\sup_{x \in I} \left| f(x) - \boldsymbol{\psi}(x)^T\boldsymbol{\theta} \right| \leq \eta.
$
The set of values of the sieve coefficients $\boldsymbol{\theta} \in \Theta^{q+1}$ approximating a probability density function $f \in \mathcal{L}$ in this manner is denoted by
\begin{equation}\label{eq:bq}
\mathbf{B}_{q+1,\eta}(f)=\left\{\boldsymbol{\theta}\in\Theta^{q+1}: \sup_{x\in I}|f(x)-\boldsymbol{\psi}(x)^T\boldsymbol{\theta}|\leq\eta\right\}.
\end{equation}

We next incorporate the linear complex-valued moment restrictions (\ref{eq:moment_condition}) in this sieve framework.
For every function $\psi \in \mathcal{L}$ and for every frequency $t \in \mathbb{R}$, 
define
\begin{align}
{R}_{\psi,t}(y_1,y_2)&=-\cos(ty_2)(y_1\mathrm{Im}({\phi}(t))+\mathrm{Re}({\phi}^{(1)}(t)))-\sin(ty_2)(y_1\mathrm{Re}({\phi}(t))-\mathrm{Im}({\phi}^{(1)}(t)))
&&\text{and}
\label{eq:def_R}
\\
{I}_{\psi,t}(y_1,y_2)&=\cos(ty_2)(y_1\mathrm{Re}({\phi}(t))+\mathrm{Im}({\phi}^{(1)}(t)))-\sin(ty_2)(y_1\mathrm{Im}({\phi}(t))-\mathrm{Re}({\phi}^{(1)}(t))),
&&
\label{eq:def_I}
\end{align}
where $\phi = \mathcal{F} \psi$. 
Note that $\mathrm{Re}(\cdot)$ (resp., $\mathrm{Im}(\cdot)$) denotes the real (resp., imaginary) part of a complex number. 
Further, stack these functions across $\psi \in \{\psi_0,\dots,\psi_q\}$ to define the random vector  
\begin{align*}
\mathbf{R}_{t}&=({R}_{\psi_0,t}(Y_1,Y_2),\ldots,{R}_{\psi_q,t}(Y_1,Y_2))^T
\qquad\text{and}\\
\mathbf{I}_{t}&=({I}_{\psi_0,t}(Y_1,Y_2),\ldots,{I}_{\psi_q,t}(Y_1,Y_2))^T.
\end{align*}
With these notations, we now represent the linear complex-valued moment restrictions (\ref{eq:moment_condition}) for the sieve approximation by
\begin{align}
|\mathbb{E}_n[\mathbf{R}_{t}]^T\boldsymbol{\theta}| \leq \delta(t)
\qquad\text{and}\qquad
|\mathbb{E}_n[\mathbf{I}_{t}]^T\boldsymbol{\theta}| \leq \delta(t)
\label{eq:moment_inequalities}
\end{align}
for all $t \in [-T,T]$ for $T \in (0,\infty)$, where $\delta(t)>0$ is the tolerance level of sieve approximation error for each $t \in [-T,T]$.

Our construction of the confidence band is based on a test statistic that quantifies the extent of deviation from the moment inequalities (\ref{eq:moment_inequalities}). 
To construct a feasible test statistic, we use a grid $\{t_1,\dots,t_L\} \subset [-T,T]$ of $L$ frequencies.
Define the test statistic by
\begin{equation*}
T(\boldsymbol{\theta})=\sqrt{n}\max_{1\leq l\leq L}\max\left\{\frac{|\mathbb{E}_n[\mathbf{R}_{t_l}]^T\boldsymbol{\theta}|-\delta(t_l)}{\sqrt{\boldsymbol{\theta}^T\mathbb{V}_n(\mathbf{R}_{t_l})\boldsymbol{\theta}}},\frac{|\mathbb{E}_n[\mathbf{I}_{t_l}]^T\boldsymbol{\theta}|-\delta(t_l)}{\sqrt{\boldsymbol{\theta}^T\mathbb{V}_n(\mathbf{I}_{t_l})\boldsymbol{\theta}}}\right\}
\end{equation*}
for each $\boldsymbol{\theta}\in\Theta^{q+1}$.
Let $\alpha\in(0,1/2)$ be fixed.
We define the critical value $c(\alpha,\boldsymbol{\theta})$ of this statistic $T(\boldsymbol{\theta})$ by the conditional $(1-\alpha)$-th quantile of the multiplier bootstrap statistic
$$
\sqrt{n}\max_{1\leq l\leq L}\max\left\{\frac{\left|\mathbb{E}_n[\boldsymbol{\epsilon}(\mathbf{R}_{t_l}-\mathbb{E}_n[\mathbf{R}_{t_l}])]^T\boldsymbol{\theta}\right|}{\sqrt{\boldsymbol{\theta}^T\mathbb{V}_n(\mathbf{R}_{t_l})\boldsymbol{\theta}}},\frac{\left|\mathbb{E}_n[\boldsymbol{\epsilon}(\mathbf{I}_{t_l}-\mathbb{E}_n[\mathbf{I}_{t_l}])]^T\boldsymbol{\theta}\right|}{\sqrt{\boldsymbol{\theta}^T\mathbb{V}_n(\mathbf{I}_{t_l})\boldsymbol{\theta}}} \right\}
$$
given the data, where $\boldsymbol{\epsilon}_1,\ldots,\boldsymbol{\epsilon}_n$ are independent standard normal random variables independent of the data. 
As a more conservative yet simpler alternative following \citet[][eq. (19)]{chernozhukov/chetverikov/kato:2017}, we may define the critical value as
\begin{equation*}\label{eq:critical_value}
c(\alpha) = \frac{\Phi^{-1}(1-\alpha/(4L))}{1-\Phi^{-1}(1-\alpha/(4L))^2/n},
\end{equation*}
where $\Phi$ is the cumulative distribution function of the standard normal distribution.
Our confidence band for the density function of $X$ is given by the $I$-restriction of
\begin{equation}\label{eq:confidence_band}
\mathcal{C}_{n}(\alpha)=\left\{f\in\mathcal{L}:\ T(\boldsymbol{\theta})\leq c(\alpha,\boldsymbol{\theta})\mbox{ for some }\boldsymbol{\theta}\in\mathbf{B}_{q+1,\eta}(f)\right\},
\end{equation}
where $\mathbf{B}_{q+1,\eta}(f)$ is defined in (\ref{eq:bq}).

A practical procedure to obtain this confidence band is outlined as Algorithm \ref{algorithm:main} below.
While this algorithm prescribes a general procedure, we also elaborate on details of practical considerations in Section \ref{sec:practical}, where we introduce the Hermite orthonormal sieve (Section \ref{sec:sieve}), propose concrete choice rules for the tuning parameters (Section \ref{sec:tuning_parameters}), and present a more concrete algorithm for these sieve and tuning parameters (Section \ref{sec:implementation}).

\begin{algorithm}\label{algorithm:main}${}$
\begin{enumerate}
\item For each $x \in I$, compute 
\begin{align*}
f^L(x) = \min_{\boldsymbol{\theta} \in \Theta^{q+1}} \boldsymbol{\psi}(x)^T \boldsymbol{\theta}
 \quad\text{subject to}\quad & T(\boldsymbol{\theta}) \leq c(\alpha,\boldsymbol{\theta})
\\
& \boldsymbol{\psi}(x)^T \boldsymbol{\theta} \geq -\eta \text{ for all } x \in I
\\
& \left| \sqrt{2\pi} \boldsymbol{\psi}(0)^T \text{diag}\left(1,0,-1,0\dots\right) \boldsymbol{\theta} -1 \right| \leq \eta
\end{align*}
\item For each $x \in I$, compute 
\begin{align*}
f^U(x) = \max_{\boldsymbol{\theta} \in \Theta^{q+1}} \boldsymbol{\psi}(x)^T \boldsymbol{\theta}
 \quad\text{subject to}\quad & T(\boldsymbol{\theta}) \leq c(\alpha,\boldsymbol{\theta})
\\
& \boldsymbol{\psi}(x)^T \boldsymbol{\theta} \geq -\eta \text{ for all } x \in I
\\
& \left| \sqrt{2\pi} \boldsymbol{\psi}(0)^T \text{diag}\left(1,0,-1,0\dots\right) \boldsymbol{\theta} -1 \right| \leq \eta
\end{align*}
\item The confidence band is set to $\left[ f^L(x) - \eta, \ f^U(x) + \eta \right]$, $x \in I$.
\end{enumerate}
\end{algorithm}

\begin{remark}
The baseline idea of our confidence band construction is to discretize the continuum of moment conditions (\ref{eq:moment_condition}) into a set of finite but many moment inequalities on the sieve coefficients $\boldsymbol{\theta}$, calibrate critical values by the multiplier bootstrap for the max statistic as in \cite{chernozhukov/chetverikov/kato:2017}, and project a confidence set for $\boldsymbol{\theta}$ into a confidence band for $f_{X}$.  
A natural question would be whether we could use directly the continuum of moment conditions (\ref{eq:moment_condition}) without discretization, similarly to, e.g., \cite{AndrewsShi2013,ANDREWS2017275}. One difficulty, however, is that the moment functions corresponding to the moment condition (\ref{eq:moment_condition}) at given $f_{X}$, i.e., $\{ (y_{1},y_{2}) \mapsto R_{f_{X},t}(y_{1},y_{2}), I_{f_{X},t}(y_{1},y_{2}) : t \in \mathbb{R} \}$, is not likely to be Donsker, in view of the fact that, e.g.,  the function class $\{  y_{2} \mapsto \cos (ty_{2}) : t \in \mathbb{R} \}$ is non-Donsker as soon as $Y_{2}$ has a density (which is the case under our assumption), so that the ``manageability'' condition in \cite{AndrewsShi2013,ANDREWS2017275} would not be satisfied in our case.  Indeed, the preceding function class is not Glivenko-Cantelli from the Riemann-Lebesgue lemma and discreteness of the empirical distribution; see \cite{FeuervergerMureika1977} for details. Another potential approach would be to apply the method of continuum of moment conditions developed in \cite{CarrascoFlorens2000}. Their analysis relies on point-identification of the parameter of interest and, more importantly, focused on a finite dimensional parameter of interest (so the convergence rate of their estimator is the parametric rate), so that their approach is not directly applicable to our problem.
\end{remark}

\section{Properties of the Confidence Band}\label{sec:properties}
In this section, we present theoretical properties of the confidence band (\ref{eq:confidence_band}).
Let $\mathcal{P}$ denote a given space to which the joint distribution of $(Y_1,Y_2)$ belongs.
For every $P\in\mathcal{P}$, define the identified set
\begin{equation}\label{eq:L0}
\mathcal{L}_0(P) = \left\{ f\in\mathcal{L} \ : \ \phi = \mathcal{F}f \ \text{and} \ \mathbb{E}_P\left[\left(iY_1\phi(t)-\phi^{(1)}(t)\right)\exp(itY_2)\right]=0 \ \text{for every } t \in \mathbb{R} \right\}
\end{equation}
as the set of density functions for which the linear complex-valued moment restriction (\ref{eq:moment_condition}) is satisfied.
Furthermore, we define the sieve-approximation counterpart of $\mathcal{L}_0(P)$ by
\begin{equation}\label{eq:L0star}
\mathcal{L}_0^\ast(P)=
\left\{
f\in\mathcal{L}:
\inf_{\boldsymbol{\theta}_\ast\in\mathbf{B}_{q+1,\eta}(f)}\max_{1\leq l\leq L}\left(\max\{|\mathbb{E}_P[\mathbf{R}_{t_l}]^T\boldsymbol{\theta}_\ast|,|\mathbb{E}_P[\mathbf{I}_{t_l}]^T\boldsymbol{\theta}_\ast|\}-\delta(t_l)\right) \leq 0
\right\},
\end{equation}
where $\mathbf{B}_{q+1,\eta}(f)$ is defined in (\ref{eq:bq}).
Here, the infimum over the empty set is understood to be the infinity. 

The current section is structured as follows.
First, we establish the size control for the confidence band (\ref{eq:confidence_band}) to contain the approximation set $\mathcal{L}_0^\ast(P)$ in Section \ref{sec:size}.
Second, we establish the containment of the identified set by the approximation set (i.e., $\mathcal{L}_0(P) \subset \mathcal{L}_0^\ast(P)$) in Section \ref{sec:bounds}.
These two pieces of the results together show the validity of the confidence band (\ref{eq:confidence_band}) to contain the identified set $\mathcal{L}_0(P)$.
Lastly, Section \ref{sec:power} presents power properties with local alternatives.
Throughout, we assume to observe $n$ i.i.d. copies of $(Y_1,Y_2)$ drawn from $P \in \mathcal{P}$.

\subsection{Size Control}\label{sec:size}
We make the following assumption for a uniform size control. 
\begin{assumption}\label{a:size_control}
(i) $\mathbb{E}_P\left[Y_1^2\right]<\infty$.
(ii) There are constants $0 < c_1<1/2$ and $C_1 > 0$ such that 
$$
\left(M_{L,q,3}^3(\boldsymbol{\theta},P) \vee M_{L,q,4}^2(\boldsymbol{\theta},P) \vee B_{L,q}(\boldsymbol{\theta},P) \right)^2 \log^{7/2}(4Ln)\leq C_1n^{1/2-c_1}
$$
for all $P \in \mathcal{P}$ and $\theta \in \Theta^{q+1}$, where\\{\small
$
M_{L,q,k}(\boldsymbol{\theta},P) = 
\max_{1 \leq l \leq L} \max
\left\{\mathbb{E}_P\left[\left|\frac{(\mathbf{R}_{t_l}-\mathbb{E}_P[\mathbf{R}_{t_l}])^T\boldsymbol{\theta}}{\sqrt{\boldsymbol{\theta}^T\mathbb{V}_P(\mathbf{R}_{t_l})\boldsymbol{\theta}}}\right|^k\right]^{1/k}, \mathbb{E}_P\left[\left|\frac{(\mathbf{I}_{t_l}-\mathbb{E}_P[\mathbf{I}_{t_l}])^T\boldsymbol{\theta}}{\sqrt{\boldsymbol{\theta}^T\mathbb{V}_P(\mathbf{I}_{t_l})\boldsymbol{\theta}}}\right|^k\right]^{1/k} \right\}
$}\\
and {\small
$
B_{L,q}(\boldsymbol{\theta},P) = 
\mathbb{E}_P\left[\max_{1 \leq l \leq L} \max
\left\{\left|\frac{(\mathbf{R}_{t_l}-\mathbb{E}_P[\mathbf{R}_{t_l}])^T\boldsymbol{\theta}}{\sqrt{\boldsymbol{\theta}^T\mathbb{V}_P(\mathbf{R}_{t_l})\boldsymbol{\theta}}}\right|^4 , \left|\frac{(\mathbf{I}_{t_l}-\mathbb{E}_P[\mathbf{I}_{t_l}])^T\boldsymbol{\theta}}{\sqrt{\boldsymbol{\theta}^T\mathbb{V}_P(\mathbf{I}_{t_l})\boldsymbol{\theta}}}\right|^4 \right\} \right]^{1/4}
$}.
\end{assumption}

\begin{remark}
When $\mathbb{E}_P\left[Y_1^4\right]<\infty$ and $\{\psi_0,\ldots,\psi_q\}$ are the Hermite functions, Assumption \ref{a:size_control} (ii) holds if 
\begin{equation}\label{eq:assn2_sufficient}
(q+1)/
\min_{t\in[-T,T]}\min\left\{
\mathrm{eig}_{\min}(\mathbb{V}_P(\mathbf{R}_{t})),
\mathrm{eig}_{\min}(\mathbb{V}_P(\mathbf{I}_{t}))
\right\}=O(n^{1/6-c_1/3}\log^{-7/6}(4Ln)).
\end{equation}
A derivation of this condition is in Appendix \ref{appendix:proof_assn2_sufficient}.
Eq. (\ref{eq:assn2_sufficient}) restricts how fast  $q$ and $T$ can increase to infinity.\footnote{Loosely speaking, the term $(q+1)/
\min_{t\in[-T,T]}\min\left\{
\mathrm{eig}_{\min}(\mathbb{V}_P(\mathbf{R}_{t})),
\mathrm{eig}_{\min}(\mathbb{V}_P(\mathbf{I}_{t}))
\right\}$ is increasing in $q$ and $T$. When $q$ increases, the numerator increases and the minimum eigenvalues of the $(q+1)$ square matrices, $\mathbb{V}_P(\mathbf{R}_{t})$ and $\mathbb{V}_P(\mathbf{I}_{t})$, can be smaller. Moreover, when $T$ increases, the denominator decreases.}
On the other hand, Eq. (\ref{eq:assn2_sufficient}) does not restrict the choice of $L$ in the sense that, as long as $L$ grows at a polynomial rate of $n$, the term $\log^{-7/6}(4Ln)$ is negligible on the right hand side. 
\end{remark}

\begin{theorem}[Size Control]\label{theorem:size_control}
Suppose that Assumption \ref{a:size_control} holds.
Then, there exist positive constants $c$ and $C$ depending only on $c_{1}$ and $C_{1}$ such that 
$$
\inf_{P\in\mathcal{P}}\inf_{f\in\mathcal{L}_0^\ast(P)}\mathbb{P}_P(f\in \mathcal{C}_{n}(\alpha))\geq 1-\alpha-Cn^{-c}.
$$
\end{theorem}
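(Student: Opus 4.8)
The plan is to reduce the event $\{f \in \mathcal{C}_n(\alpha)\}$ to a one-sided test at a single, population-feasible sieve coefficient, and then to invoke the self-normalized many-moment-inequalities machinery of \citet{chernozhukov/chetverikov/kato:2017}. Fix $P \in \mathcal{P}$ and $f \in \mathcal{L}_0^\ast(P)$. Since $\mathbf{B}_{q+1,\eta}(f)$ is a closed subset of the compact set $\Theta^{q+1}$ (and nonempty because the defining infimum is finite) and the map $\boldsymbol{\theta}_\ast \mapsto \max_l(\max\{|\mathbb{E}_P[\mathbf{R}_{t_l}]^T\boldsymbol{\theta}_\ast|, |\mathbb{E}_P[\mathbf{I}_{t_l}]^T\boldsymbol{\theta}_\ast|\} - \delta(t_l))$ is continuous, the infimum in (\ref{eq:L0star}) is attained. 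Hence there exists $\boldsymbol{\theta}_\ast \in \mathbf{B}_{q+1,\eta}(f)$ with $|\mathbb{E}_P[\mathbf{R}_{t_l}]^T\boldsymbol{\theta}_\ast| \leq \delta(t_l)$ and $|\mathbb{E}_P[\mathbf{I}_{t_l}]^T\boldsymbol{\theta}_\ast| \leq \delta(t_l)$ for every $l$. Because the band (\ref{eq:confidence_band}) only requires the existence of \emph{some} $\boldsymbol{\theta} \in \mathbf{B}_{q+1,\eta}(f)$ with $T(\boldsymbol{\theta}) \leq c(\alpha,\boldsymbol{\theta})$, it follows that $\mathbb{P}_P(f \in \mathcal{C}_n(\alpha)) \geq \mathbb{P}_P(T(\boldsymbol{\theta}_\ast) \leq c(\alpha,\boldsymbol{\theta}_\ast))$, so it suffices to bound the latter probability from below uniformly.

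Second, I would peel off the population slack. Writing $\mathbb{E}_n[\mathbf{R}_{t_l}]^T\boldsymbol{\theta}_\ast = \mathbb{E}_P[\mathbf{R}_{t_l}]^T\boldsymbol{\theta}_\ast + (\mathbb{E}_n - \mathbb{E}_P)[\mathbf{R}_{t_l}]^T\boldsymbol{\theta}_\ast$ and using $|\mathbb{E}_P[\mathbf{R}_{t_l}]^T\boldsymbol{\theta}_\ast| \leq \delta(t_l)$, the triangle inequality gives $|\mathbb{E}_n[\mathbf{R}_{t_l}]^T\boldsymbol{\theta}_\ast| - \delta(t_l) \leq |(\mathbb{E}_n - \mathbb{E}_P)[\mathbf{R}_{t_l}]^T\boldsymbol{\theta}_\ast|$, and likewise for $\mathbf{I}_{t_l}$. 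Consequently $T(\boldsymbol{\theta}_\ast)$ is dominated by the population-centered self-normalized empirical process
\[
W_n = \sqrt{n}\max_{1\leq l\leq L}\max\left\{\frac{|(\mathbb{E}_n-\mathbb{E}_P)[\mathbf{R}_{t_l}]^T\boldsymbol{\theta}_\ast|}{\sqrt{\boldsymbol{\theta}_\ast^T\mathbb{V}_n(\mathbf{R}_{t_l})\boldsymbol{\theta}_\ast}}, \frac{|(\mathbb{E}_n-\mathbb{E}_P)[\mathbf{I}_{t_l}]^T\boldsymbol{\theta}_\ast|}{\sqrt{\boldsymbol{\theta}_\ast^T\mathbb{V}_n(\mathbf{I}_{t_l})\boldsymbol{\theta}_\ast}}\right\},
\]
so that $\{W_n \leq c(\alpha,\boldsymbol{\theta}_\ast)\} \subset \{T(\boldsymbol{\theta}_\ast) \leq c(\alpha,\boldsymbol{\theta}_\ast)\}$. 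The problem is now purely one of approximating the quantiles of a maximum of self-normalized sample means by those of its Gaussian multiplier bootstrap.

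Third, I would invoke \citet{chernozhukov/chetverikov/kato:2017}. After splitting each absolute value into its two one-sided pieces and stacking the real and imaginary coordinates, $W_n$ is exactly the self-normalized statistic for $p = 4L$ one-sided moment inequalities evaluated at a point that satisfies them in population, and $c(\alpha,\boldsymbol{\theta}_\ast)$ is its Gaussian-multiplier critical value. The quantities $M_{L,q,3}(\boldsymbol{\theta}_\ast,P)$, $M_{L,q,4}(\boldsymbol{\theta}_\ast,P)$, and $B_{L,q}(\boldsymbol{\theta}_\ast,P)$ in Assumption \ref{a:size_control}(ii) are precisely the moment and envelope quantities entering their regularity conditions, and the growth bound $(M_{L,q,3}^3 \vee M_{L,q,4}^2 \vee B_{L,q})^2 \log^{7/2}(4Ln) \leq C_1 n^{1/2-c_1}$ is the rate condition under which their bootstrap approximation holds with polynomial error. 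Their result then yields $\mathbb{P}_P(W_n \leq c(\alpha,\boldsymbol{\theta}_\ast)) \geq 1-\alpha-Cn^{-c}$ with $c,C$ depending only on $c_1,C_1$. Since the growth bound is assumed uniformly over $P \in \mathcal{P}$ and $\boldsymbol{\theta} \in \Theta^{q+1}$, these constants depend on neither $P$ nor the chosen $\boldsymbol{\theta}_\ast$, so taking the infimum over $f \in \mathcal{L}_0^\ast(P)$ and then over $P \in \mathcal{P}$ preserves the bound and completes the argument.

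The hard part will be the clean reduction of our statistic to the exact hypotheses of \citet{chernozhukov/chetverikov/kato:2017}. I expect the delicate points to be (a) confirming that the self-normalization by the empirical variances $\mathbb{V}_n$, which appear identically in $W_n$ and in the bootstrap critical value, is subsumed by their self-normalized construction rather than requiring a separate variance-estimation step, and (b) checking that the effective number of moment inequalities is correctly identified as $4L$ (accounting for the two-sided absolute values and the real/imaginary split), so that the logarithmic factor $\log^{7/2}(4Ln)$ in the growth condition and the Bonferroni factor in the conservative critical value $c(\alpha)$ are the right ones. Once these identifications are secured, the stated size bound follows directly from their approximation theorem, with uniformity over $\mathcal{P}$ inherited from the uniform growth condition.
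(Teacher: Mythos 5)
Your proposal is correct and follows essentially the same route as the paper's proof: fix $P$ and $f\in\mathcal{L}_0^\ast(P)$, reduce the coverage event to a single $\boldsymbol{\theta}_\ast\in\mathbf{B}_{q+1,\eta}(f)$ satisfying the slacked population moment inequalities, and invoke \citet[Theorem A.1]{chernozhukov/chetverikov/kato:2017} under Assumption \ref{a:size_control}, whose uniformity over $P$ and $\boldsymbol{\theta}$ yields constants depending only on $c_1,C_1$. The details you add --- attainment of the infimum in (\ref{eq:L0star}) via compactness, the triangle-inequality centering, and the count of $4L$ one-sided inequalities --- are steps the paper leaves implicit inside the citation, and they check out.
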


A proof is provided in Appenedix \ref{sec:theorem:size_control}.
This theorem guarantees the size control for the event of the inclusion of the density function in the approximate identified set $\mathcal{L}_0^\ast(P)$ as opposed to the identified set $\mathcal{L}_0(P)$.
The next section presents conditions under which the approximate identified set $\mathcal{L}_0^\ast(P)$ contains the identified set $\mathcal{L}_0(P)$ for every possible joint distribution $P \in \mathcal{P}$.

\subsection{Bounds of Approximation Errors}\label{sec:bounds}
Throughout, we equip $\mathcal{L}^2$ with the inner product $\langle{\cdot,\cdot}\rangle$ defined by
$$
\langle{f_1,f_2}\rangle = \int_\mathbb{R} f_1(x) f_2(x) dx.
$$

\begin{assumption}\label{a:orthonormal}${}$\\
(i)
$\Psi = \{\psi_j \ : \ j=0,1,\ldots\}$ is an orthonormal basis of $\mathcal{L}^2$, i.e., orthonormal and complete in $\mathcal{L}^2$.
\\
(ii) $\left( \langle{f,\psi_0}\rangle, \dots, \langle{f,\psi_q}\rangle\right) \in \Theta^{q+1}$ for all $f \in \mathcal{L}_0(P)$ for all $P \in \mathcal{P}$.
\end{assumption}

In Section \ref{sec:sieve}, we propose a concrete orthonormal basis $\Psi$ and the set $\Theta^{q+1}$ of coefficients to satisfy Assumption \ref{a:orthonormal}.
The following theorem provides a guide for choices of $\eta$ and $\delta(t_1),\dots,\delta(t_L)$ such that $\mathcal{L}_0^\ast(P)$ contains $\mathcal{L}_0(P)$ for every possible $P \in \mathcal{P}$.

\begin{theorem}[Approximation]\label{theorem:approximation}
Suppose that Assumption \ref{a:orthonormal} is satisfied.
If
\begin{align}
&\sup_{f\in\mathcal{L}}\sup_{t\in I}\left|\sum_{j=q+1}^{\infty}\langle{ f,\psi_j }\rangle \cdot \psi_j(t)\right| \leq \eta 
\qquad\text{and}
\label{eq:approximation_psi}
\\
&\sup_{f\in\mathcal{L}}\left|\sum_{j=q+1}^{\infty} \langle{ f,\psi_j }\rangle \cdot 
\left(
i\phi_j(t_l)\cdot\mathbb{E}_P\left[Y_1\exp(it_lY_2)\right]
-\phi_j^{(1)}(t_l)\cdot\mathbb{E}_P\left[\exp(it_lY_2)\right]
\right)
\right| \leq \delta(t_l)
\label{eq:approximation_moments}
\end{align}
for every $l \in \{1,\ldots,L\}$, then $\mathcal{L}_0(P)\subset\mathcal{L}_0^\ast(P)$ holds for all $P \in \mathcal{P}$.
\end{theorem}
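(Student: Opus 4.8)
The plan is to verify directly that an arbitrary $f \in \mathcal{L}_0(P)$ belongs to $\mathcal{L}_0^\ast(P)$ by exhibiting a single admissible coefficient vector $\boldsymbol{\theta}_\ast$ that makes the defining inequality in (\ref{eq:L0star}) hold. The natural candidate is the vector of leading generalized Fourier coefficients, $\boldsymbol{\theta}_\ast = (\langle f,\psi_0\rangle,\ldots,\langle f,\psi_q\rangle)^T$, which lies in $\Theta^{q+1}$ by Assumption \ref{a:orthonormal}(ii). First I would check membership $\boldsymbol{\theta}_\ast \in \mathbf{B}_{q+1,\eta}(f)$: since $\Psi$ is an orthonormal basis (Assumption \ref{a:orthonormal}(i)), completeness gives $f = \sum_{j=0}^\infty \langle f,\psi_j\rangle\psi_j$, so the approximation residual on $I$ is exactly the tail $f(x)-\boldsymbol{\psi}(x)^T\boldsymbol{\theta}_\ast = \sum_{j=q+1}^\infty\langle f,\psi_j\rangle\psi_j(x)$, which is bounded in sup-norm by $\eta$ precisely by hypothesis (\ref{eq:approximation_psi}). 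Hence $\mathbf{B}_{q+1,\eta}(f)$ is nonempty and contains $\boldsymbol{\theta}_\ast$.

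The core of the argument is to control the moment slacks $|\mathbb{E}_P[\mathbf{R}_{t_l}]^T\boldsymbol{\theta}_\ast|$ and $|\mathbb{E}_P[\mathbf{I}_{t_l}]^T\boldsymbol{\theta}_\ast|$ at each grid point. By construction in (\ref{eq:def_R})--(\ref{eq:def_I}), $R_{\psi,t}$ and $I_{\psi,t}$ are the real and imaginary parts of $(iy_1\phi(t)-\phi^{(1)}(t))e^{ity_2}$ with $\phi=\mathcal{F}\psi$, and both depend linearly on $\psi$. Writing $g=\sum_{j=0}^q\langle f,\psi_j\rangle\psi_j$ for the finite sieve projection, linearity gives $\mathbb{E}_P[\mathbf{R}_{t_l}]^T\boldsymbol{\theta}_\ast = \mathrm{Re}\,w_l$ and $\mathbb{E}_P[\mathbf{I}_{t_l}]^T\boldsymbol{\theta}_\ast = \mathrm{Im}\,w_l$ (up to the sign convention of (\ref{eq:def_I})), where $w_l = i(\mathcal{F}g)(t_l)\mathbb{E}_P[Y_1 e^{it_lY_2}] - (\mathcal{F}g)^{(1)}(t_l)\mathbb{E}_P[e^{it_lY_2}]$; this step involves only a finite sum and so is free of convergence issues. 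I would then use that $f\in\mathcal{L}_0(P)$, i.e.\ the full moment (\ref{eq:moment_condition}) vanishes at $t_l$, to replace $\mathcal{F}g$ by the negative of its complementary tail: the aim is to show $w_l = -\sum_{j=q+1}^\infty\langle f,\psi_j\rangle\big(i\phi_j(t_l)\mathbb{E}_P[Y_1 e^{it_lY_2}]-\phi_j^{(1)}(t_l)\mathbb{E}_P[e^{it_lY_2}]\big)$, whose modulus is at most $\delta(t_l)$ by hypothesis (\ref{eq:approximation_moments}). Since $|\mathrm{Re}\,w_l|\vee|\mathrm{Im}\,w_l|\le|w_l|\le\delta(t_l)$, the bracketed term in (\ref{eq:L0star}) is nonpositive at $\boldsymbol{\theta}_\ast$ for every $l$, and taking the infimum over $\mathbf{B}_{q+1,\eta}(f)$ yields $f\in\mathcal{L}_0^\ast(P)$. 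As $P\in\mathcal{P}$ and $f\in\mathcal{L}_0(P)$ were arbitrary, the claimed inclusion follows.

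The step I expect to be the main obstacle is the replacement of $\mathcal{F}g$ by minus its tail, which secretly requires the pointwise identities $(\mathcal{F}f)(t_l)=\sum_{j=0}^\infty\langle f,\psi_j\rangle\phi_j(t_l)$ and $(\mathcal{F}f)^{(1)}(t_l)=\sum_{j=0}^\infty\langle f,\psi_j\rangle\phi_j^{(1)}(t_l)$: that is, exchanging the $\mathcal{L}^2$-convergent basis expansion of $f$ with the Fourier integral and with $t$-differentiation, both evaluated at a fixed frequency. This is a genuine analytic gap, because basis convergence holds only in $\mathcal{L}^2$, whereas the argument needs pointwise evaluation of $\phi$ and of its derivative. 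I would close it by establishing $\mathcal{L}^1$-convergence of the partial sums of $f$ and of $x\mapsto xf(x)$ (the latter controlling $\phi^{(1)}=\mathcal{F}(ix f)$ through the finite-first-moment integrability carried by the class $\mathcal{L}$, cf.\ Assumption \ref{a:continuous_independence}(i)), since $\mathcal{L}^1$-convergence of the integrands transfers to uniform, hence pointwise, convergence of their Fourier transforms. For the Hermite sieve of Appendix \ref{sec:sieve} this is especially clean, as the eigenfunction property $\mathcal{F}\psi_j\propto(-i)^j\psi_j$ makes $\{\phi_j\}$ itself an orthonormal basis and reduces both interchanges to known convergence properties of Hermite expansions together with the three-term recursion for $x\psi_j$.
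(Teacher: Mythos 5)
Your proposal is correct and takes essentially the same route as the paper: the paper's proof also selects the truncated coefficient vector $\boldsymbol{\theta}_\ast=(\langle f,\psi_0\rangle,\ldots,\langle f,\psi_q\rangle)^T$, forms $\phi_{0:q}=\sum_{j=0}^{q}\langle f,\psi_j\rangle\phi_j$, uses the exact restriction (\ref{eq:moment_condition}) for $f\in\mathcal{L}_0(P)$ to rewrite the sieve moment as minus the tail $\sum_{j=q+1}^{\infty}\langle f,\psi_j\rangle\,\mathbb{E}_P[(iY_1\phi_j(t_l)-\phi_j^{(1)}(t_l))e^{it_lY_2}]$, bounds it by (\ref{eq:approximation_moments}), and verifies membership in $\mathbf{B}_{q+1,\eta}(f)$ via (\ref{eq:approximation_psi}). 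The interchange-of-limits issue you flag (pointwise evaluation of $\phi$, $\phi^{(1)}$ and swapping the series with $\mathbb{E}_P$) is dispatched in the paper by a citation to Folland's Theorem 5.27 without the explicit $\mathcal{L}^1$ argument you sketch, so your treatment is if anything more careful on that step.
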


A proof is provided in Appendix \ref{sec:theorem:approximation}.
In Section \ref{sec:sieve}, we provide concrete evaluations of the left-hand side of (\ref{eq:approximation_psi}) and (\ref{eq:approximation_moments}) under a concrete orthonormal basis $\Psi$, namely the Hermite orthonormal basis.
Putting Theorems \ref{theorem:size_control} and \ref{theorem:approximation} together, we obtain the following result on the validity of the confidence band.
\begin{corollary}[Validity of the Confidence Band]
Suppose that the conditions of Theorems \ref{theorem:size_control} and \ref{theorem:approximation} are satisfied.
Then, there exist positive constants $c$ and $C$ depending only on $c_{1}$ and $C_{1}$ such that 
$$
\inf_{P\in\mathcal{P}}\inf_{f\in\mathcal{L}_0(P)}\mathbb{P}_P(f\in \mathcal{C}_{n}(\alpha))\geq 1-\alpha-Cn^{-c}.
$$
\end{corollary}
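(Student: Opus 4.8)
The plan is to obtain the corollary as a direct composition of the two preceding theorems, since the target inequality is exactly what results from chaining them. The quantity of interest is the worst-case coverage probability of the band $\mathcal{C}_n(\alpha)$, taken over all densities in the identified set $\mathcal{L}_0(P)$ and all $P\in\mathcal{P}$. Theorem \ref{theorem:size_control} already supplies the analogous lower bound with the approximate identified set $\mathcal{L}_0^\ast(P)$ in the inner infimum, so the entire task reduces to transferring that bound from $\mathcal{L}_0^\ast(P)$ to $\mathcal{L}_0(P)$.

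First I would invoke Theorem \ref{theorem:approximation}. Under its hypotheses, namely Assumption \ref{a:orthonormal} together with the approximation bounds (\ref{eq:approximation_psi}) and (\ref{eq:approximation_moments}), it yields the set inclusion $\mathcal{L}_0(P)\subset\mathcal{L}_0^\ast(P)$ for every $P\in\mathcal{P}$. This inclusion is the crux of the argument, and since it is the conclusion of the Approximation theorem I may take it as given.

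Second, I would use the elementary monotonicity of the infimum under set inclusion. For each fixed $P$, the map $f\mapsto\mathbb{P}_P(f\in\mathcal{C}_n(\alpha))$ is a fixed function, and restricting its infimum to the smaller set $\mathcal{L}_0(P)$ can only raise the value, so
$$
\inf_{f\in\mathcal{L}_0(P)}\mathbb{P}_P(f\in\mathcal{C}_{n}(\alpha))\geq\inf_{f\in\mathcal{L}_0^\ast(P)}\mathbb{P}_P(f\in\mathcal{C}_{n}(\alpha)).
$$
Taking $\inf_{P\in\mathcal{P}}$ on both sides preserves the inequality.

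Finally, I would apply Theorem \ref{theorem:size_control}, whose conclusion bounds the right-hand side below by $1-\alpha-Cn^{-c}$ with constants $c,C$ depending only on $c_1$ and $C_1$. Chaining the two inequalities then gives the claimed bound with the same constants. There is no genuine obstacle in this corollary: it follows immediately from the two theorems, and the only point worth stating explicitly is the direction of the infimum inequality, which runs the favorable way precisely because $\mathcal{L}_0(P)$ is the \emph{smaller} of the two sets. All substantive work, the ill-posedness handled by the moment-inequality formulation, the Gaussian multiplier approximation behind the critical value, and the bias control for the Hermite sieve, is already carried by Theorems \ref{theorem:size_control} and \ref{theorem:approximation}.
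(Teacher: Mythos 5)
Your proposal is correct and matches the paper's own (implicit) argument exactly: the paper obtains the corollary by combining Theorem \ref{theorem:approximation}, which gives $\mathcal{L}_0(P)\subset\mathcal{L}_0^\ast(P)$ for all $P\in\mathcal{P}$, with Theorem \ref{theorem:size_control} via the monotonicity of the infimum over the smaller set. Nothing is missing, and your explicit remark about the direction of the infimum inequality is the only nontrivial step.
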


\subsection{Power}\label{sec:power}
We introduce the following short-hand notation for the random variable defined as the maximum deviation of the sample variance from the population variance:
$$
B_{V}=
\sup_{\boldsymbol{\theta}\in \mathbf{B}_{q+1,\eta}(f)}\max_{l=1,\ldots,L}\max\left\{\boldsymbol{\theta}^T(\mathbb{V}_n(\mathbf{R}_{t_l})-\mathbb{V}_P(\mathbf{R}_{t_l}))\boldsymbol{\theta},\boldsymbol{\theta}^T(\mathbb{V}_n(\mathbf{I}_{t_l})-\mathbb{V}_P(\mathbf{I}_{t_l}))\boldsymbol{\theta}\right\}.
$$
The following theorem shows a power property of our proposed inference method.

\begin{theorem}[Power]\label{thm_power}
Suppose that Assumption \ref{a:size_control} (i) holds.
For every $P\in\mathcal{P}$, every $f\in\mathcal{L}$, every $\nu>0$, and every $b\in(0,\infty)$,
if there is $t_\ast \in \{t_1,\dots,t_L\}$ such that at least one of the following statements holds:   
\begin{align}
\sqrt{n}\inf_{\boldsymbol{\theta}\in \mathbf{B}_{q+1,\eta}(f)}\frac{\mathbb{E}_P[\mathbf{R}_{t_\ast}]^T\boldsymbol{\theta}-\delta(t_\ast)}{\sqrt{\boldsymbol{\theta}^T\mathbb{V}_P(\mathbf{R}_{t_\ast})\boldsymbol{\theta}+\nu}}
\geq(1+b)\cdot\mathbb{E}_P\left[\sup_{\boldsymbol{\theta}\in\mathbf{B}_{q+1,\eta}(f)}\frac{|\mathbb{G}_n[\mathbf{R}_{t_\ast}]^T\boldsymbol{\theta}|}{\sqrt{\boldsymbol{\theta}^T\mathbb{V}_n(\mathbf{R}_{t_\ast})\boldsymbol{\theta}}}\right]\notag\\
+\sqrt{2\log(4L)}+\sqrt{2\log(1/\alpha)}
\label{local_alternatves1}
\end{align}
\begin{align}
\sqrt{n}\inf_{\boldsymbol{\theta}\in \mathbf{B}_{q+1,\eta}(f)}\frac{-\mathbb{E}_P[\mathbf{R}_{t_\ast}]^T\boldsymbol{\theta}-\delta(t_\ast)}{\sqrt{\boldsymbol{\theta}^T\mathbb{V}_P(\mathbf{R}_{t_\ast})\boldsymbol{\theta}+\nu}}
\geq(1+b)\cdot\mathbb{E}_P\left[\sup_{\boldsymbol{\theta}\in\mathbf{B}_{q+1,\eta}(f)}\frac{|\mathbb{G}_n[\mathbf{R}_{t_\ast}]^T\boldsymbol{\theta}|}{\sqrt{\boldsymbol{\theta}^T\mathbb{V}_n(\mathbf{R}_{t_\ast})\boldsymbol{\theta}}}\right]\notag\\
+\sqrt{2\log(4L)}+\sqrt{2\log(1/\alpha)}
\label{local_alternatves2}
\end{align}
\begin{align}
\sqrt{n}\inf_{\boldsymbol{\theta}\in \mathbf{B}_{q+1,\eta}(f)}\frac{\mathbb{E}_P[\mathbf{I}_{t_\ast}]^T\boldsymbol{\theta}-\delta(t_\ast)}{\sqrt{\boldsymbol{\theta}^T\mathbb{V}_P(\mathbf{I}_{t_\ast})\boldsymbol{\theta}+\nu}}
\geq(1+b)\cdot\mathbb{E}_P\left[\sup_{\boldsymbol{\theta}\in\mathbf{B}_{q+1,\eta}(f)}\frac{|\mathbb{G}_n[\mathbf{I}_{t_\ast}]^T\boldsymbol{\theta}|}{\sqrt{\boldsymbol{\theta}^T\mathbb{V}_n(\mathbf{I}_{t_\ast})\boldsymbol{\theta}}}\right]\notag\\
+\sqrt{2\log(4L)}+\sqrt{2\log(1/\alpha)}
\label{local_alternatves3}
\end{align}
\begin{align}
\sqrt{n}\inf_{\boldsymbol{\theta}\in \mathbf{B}_{q+1,\eta}(f)}\frac{-\mathbb{E}_P[\mathbf{I}_{t_\ast}]^T\boldsymbol{\theta}-\delta(t_\ast)}{\sqrt{\boldsymbol{\theta}^T\mathbb{V}_P(\mathbf{I}_{t_\ast})\boldsymbol{\theta}+\nu}}
\geq(1+b)\cdot\mathbb{E}_P\left[\sup_{\boldsymbol{\theta}\in\mathbf{B}_{q+1,\eta}(f)}\frac{|\mathbb{G}_n[\mathbf{I}_{t_\ast}]^T\boldsymbol{\theta}|}{\sqrt{\boldsymbol{\theta}^T\mathbb{V}_n(\mathbf{I}_{t_\ast})\boldsymbol{\theta}}}\right]\notag\\
+\sqrt{2\log(4L)}+\sqrt{2\log(1/\alpha)},
\label{local_alternatves4}
\end{align}
then 
$$
\mathbb{P}_P(f\notin \mathcal{C}_{n}(\alpha))
\geq
\mathbb{P}_P\left(B_{V}\leq \nu\right)-\frac{1}{1+b}.
$$
\end{theorem}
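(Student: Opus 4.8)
The plan is to establish the contrapositive membership characterization: $f\notin\mathcal{C}_{n}(\alpha)$ holds exactly when $T(\boldsymbol{\theta})>c(\alpha,\boldsymbol{\theta})$ for \emph{every} $\boldsymbol{\theta}\in\mathbf{B}_{q+1,\eta}(f)$, so it suffices to exhibit an event of the stated probability on which this simultaneous strict inequality holds. By the symmetry of the four hypotheses it is enough to treat the case in which (\ref{local_alternatves1}) holds; the remaining cases are identical after replacing $\mathbf{R}_{t_\ast}$ by $\mathbf{I}_{t_\ast}$ and/or using the absolute value in $T$ to select the favorable sign. Fixing such a $t_\ast$, I would first discard the outer maxima and the absolute value in $T(\boldsymbol{\theta})$, keeping only the real component at $t_\ast$, and split $\sqrt{n}\,\mathbb{E}_n[\mathbf{R}_{t_\ast}]^T\boldsymbol{\theta}=\sqrt{n}\,\mathbb{E}_P[\mathbf{R}_{t_\ast}]^T\boldsymbol{\theta}+\mathbb{G}_n[\mathbf{R}_{t_\ast}]^T\boldsymbol{\theta}$ to obtain the pointwise lower bound
\[
T(\boldsymbol{\theta})\ge\frac{\sqrt{n}\bigl(\mathbb{E}_P[\mathbf{R}_{t_\ast}]^T\boldsymbol{\theta}-\delta(t_\ast)\bigr)+\mathbb{G}_n[\mathbf{R}_{t_\ast}]^T\boldsymbol{\theta}}{\sqrt{\boldsymbol{\theta}^T\mathbb{V}_n(\mathbf{R}_{t_\ast})\boldsymbol{\theta}}}.
\]

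The second step introduces two events. Let $S=\sup_{\boldsymbol{\theta}\in\mathbf{B}_{q+1,\eta}(f)}|\mathbb{G}_n[\mathbf{R}_{t_\ast}]^T\boldsymbol{\theta}|/\sqrt{\boldsymbol{\theta}^T\mathbb{V}_n(\mathbf{R}_{t_\ast})\boldsymbol{\theta}}$ and $E=\mathbb{E}_P[S]$, which is finite by Assumption \ref{a:size_control}(i). On $\{B_{V}\le\nu\}$ one has $\boldsymbol{\theta}^T\mathbb{V}_n(\mathbf{R}_{t_\ast})\boldsymbol{\theta}\le\boldsymbol{\theta}^T\mathbb{V}_P(\mathbf{R}_{t_\ast})\boldsymbol{\theta}+\nu$ for all admissible $\boldsymbol{\theta}$; since hypothesis (\ref{local_alternatves1}) forces the population deviation $\mathbb{E}_P[\mathbf{R}_{t_\ast}]^T\boldsymbol{\theta}-\delta(t_\ast)$ to be strictly positive (its normalized infimum dominates a nonnegative quantity plus strictly positive logarithmic terms), enlarging the denominator only lowers the first summand, so that
\[
\frac{\sqrt{n}\bigl(\mathbb{E}_P[\mathbf{R}_{t_\ast}]^T\boldsymbol{\theta}-\delta(t_\ast)\bigr)}{\sqrt{\boldsymbol{\theta}^T\mathbb{V}_n(\mathbf{R}_{t_\ast})\boldsymbol{\theta}}}\ge\sqrt{n}\inf_{\boldsymbol{\theta}'\in\mathbf{B}_{q+1,\eta}(f)}\frac{\mathbb{E}_P[\mathbf{R}_{t_\ast}]^T\boldsymbol{\theta}'-\delta(t_\ast)}{\sqrt{\boldsymbol{\theta}'^T\mathbb{V}_P(\mathbf{R}_{t_\ast})\boldsymbol{\theta}'+\nu}},
\]
while the empirical-process summand is bounded below by $-S$. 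Combining these with (\ref{local_alternatves1}), on $\{B_{V}\le\nu\}\cap\{S\le(1+b)E\}$ I get $T(\boldsymbol{\theta})\ge(1+b)E+\sqrt{2\log(4L)}+\sqrt{2\log(1/\alpha)}-(1+b)E=\sqrt{2\log(4L)}+\sqrt{2\log(1/\alpha)}$ for every $\boldsymbol{\theta}\in\mathbf{B}_{q+1,\eta}(f)$.

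The third step compares this deterministic lower bound with the critical value. Because the multiplier statistic is self-normalized, each of its $2L$ standardized components is, conditionally on the data, exactly $N(0,1)$, so a union bound over the $4L$ one-sided tails gives $c(\alpha,\boldsymbol{\theta})\le\Phi^{-1}(1-\alpha/(4L))<\sqrt{2\log(4L/\alpha)}\le\sqrt{2\log(4L)}+\sqrt{2\log(1/\alpha)}$, the last inequality by subadditivity of the square root. Hence on the above intersection $T(\boldsymbol{\theta})>c(\alpha,\boldsymbol{\theta})$ simultaneously for all $\boldsymbol{\theta}\in\mathbf{B}_{q+1,\eta}(f)$, i.e. $f\notin\mathcal{C}_{n}(\alpha)$. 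Consequently $\mathbb{P}_P(f\notin\mathcal{C}_{n}(\alpha))\ge\mathbb{P}_P(B_{V}\le\nu)-\mathbb{P}_P(S>(1+b)E)$, and Markov's inequality yields $\mathbb{P}_P(S>(1+b)E)\le E/((1+b)E)=1/(1+b)$, which is precisely the asserted bound.

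I expect the main obstacle to be the third step: replacing the data-dependent critical value $c(\alpha,\boldsymbol{\theta})$ by the explicit deterministic threshold $\sqrt{2\log(4L)}+\sqrt{2\log(1/\alpha)}$ uniformly in $\boldsymbol{\theta}$. This rests on the observation that self-normalization makes every marginal of the multiplier process conditionally $N(0,1)$ for \emph{every} realization of the data, so the Gaussian tail and union bounds apply surely rather than merely with high probability. A secondary delicate point is the bookkeeping of the self-normalization in the second step: one must confirm the positivity of the population moment deviation \emph{before} inflating the denominator by $\nu$ on $\{B_V\le\nu\}$, so that the variance comparison moves the inequality in the direction that lowers, rather than raises, $T(\boldsymbol{\theta})$.
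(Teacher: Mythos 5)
Your proof is correct and follows essentially the same route as the paper's: the identical reduction of $T(\boldsymbol{\theta})$ by dropping the maxima and absolute value at $t_\ast$, the same split $\sqrt{n}\,\mathbb{E}_n[\mathbf{R}_{t_\ast}]^T\boldsymbol{\theta}=\sqrt{n}\,\mathbb{E}_P[\mathbf{R}_{t_\ast}]^T\boldsymbol{\theta}+\mathbb{G}_n[\mathbf{R}_{t_\ast}]^T\boldsymbol{\theta}$, the same variance inflation on $\{B_V\le\nu\}$ (where you usefully make explicit the positivity of the numerator that the paper leaves implicit), and the same final bookkeeping via $\mathbb{P}(A\cap B)\ge\mathbb{P}(A)-\mathbb{P}(B^c)$ and Markov's inequality. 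The only departure is minor: where the paper simply cites \citet[Lemma D.4]{chernozhukov/chetverikov/kato:2017} for $c(\alpha,\boldsymbol{\theta})\le\sqrt{2\log(4L)}+\sqrt{2\log(1/\alpha)}$, you re-derive this bound from the exact conditional $N(0,1)$ law of each self-normalized multiplier component together with a union bound and subadditivity of the square root, which is a valid self-contained substitute.
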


A proof is provided in Appendix \ref{sec:thm_power}.
According to this theorem, for any density function $f \in \mathcal{L}$ such that at least one of the moment inequalities violated at some frequency point $t_\ast$ in the grid $\{t_1,\dots,t_L\}$, then the probability that this density function does not belong to the confidence band is bounded below by $\mathbb{P}_P\left(B_{V}\leq \nu\right)-\frac{1}{1+b}$.
Choosing sequences of $\nu$ and $b$ so that $\mathbb{P}_P\left(B_{V}\leq \nu\right)-\frac{1}{1+b} \rightarrow 1$ as $n \rightarrow \infty$, therefore, this theorem implies the consistency against all fixed alternatives.

\section{Practical Considerations}\label{sec:practical}
The current section presents a guide to practice.
Construction of the confidence band and its theoretical properties are presented in Sections \ref{sec:construction} and \ref{sec:properties} with abstract objects.
These objects in particular include an orthonormal basis $\Psi = \{\psi_j \ : \ j=0,1,\ldots\}$, the sieve dimension $q$, the set $\Theta^{q+1}$ of sieve coefficients, and the tolerance levels, $\eta$, $\delta(t_1),\dots, \delta(t_L)$, of approximation errors.
Section \ref{sec:sieve} presents concrete choices of $\Psi = \{\psi_j \ : \ j=0,1,\ldots\}$ and $\Theta^{q+1}$.
Section \ref{sec:tuning_parameters} presents a concrete data-driven procedure for selecting the tolerance levels $\eta$, $\delta(t_1),\dots, \delta(t_L)$.
Section \ref{sec:implementation} presents a concrete implementation procedure for constructing the confidence band with these choices of the objects.

\subsection{The Hermite Orthonormal Basis}\label{sec:sieve}
There is a large extent of freedom of choice for an orthonormal basis $\Psi$ -- see \cite{chen:2007} for a list of options.
We recommend the Hermite orthonormal basis in particular for its convenient properties and its nice compatibility with the deconvolution framework -- a Hermite function is an eigenfunction of the Fourier transform and the Fourier inverse.\footnote{The Hermite orthonormal sieve is not location invariant, and hence we recommend to location- and scale normalize the observed data using the empirical moments.}
The Hermite functions take the form
\begin{equation}\label{eq:hermite_function}
\psi_j(x)=\frac{1}{\sqrt{2^{j} j!\sqrt{\pi}}} \cdot \exp(-x^2/2) \cdot H_j(x),
\end{equation}
$j=0,1,\ldots$,
where $H_j$ is the Hermite polynomial defined by
$$
H_j(x)=(-1)^{j} \cdot \exp(x^2) \cdot \frac{d^{j}}{dx^{j}}\exp(-x^2).
$$
The Hermite functions are the eigenfunctions of the Fourier transform operator, and specifically, $\phi_j=\mathcal{F}\psi_j=i^j\sqrt{2\pi}\psi_j$ holds.
For any $q \in \mathbb{N}$ to be chosen below, have the set of sieve coefficients satisfy
\begin{equation}\label{eq:hermite_theta}
\Theta^{q+1} \supset \left[-1.086435\pi^{-1/4},1.086435\pi^{-1/4}\right]^{q+1}.
\end{equation}
These concrete choices are made for the sake of satisfying Assumption \ref{a:orthonormal} so we can use Theorem \ref{theorem:approximation}.

\begin{proposition}[Sufficient Condition for Assumption \ref{a:orthonormal}]\label{prop:hermite_assumption}
If $\Psi=\{\psi_j:j=0,1,\ldots\}$ is the sequence of the Hermite functions given in (\ref{eq:hermite_function}), then it satisfies Assumption \ref{a:orthonormal} with the coefficient set given in (\ref{eq:hermite_theta}).
\end{proposition}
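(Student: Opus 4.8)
The plan is to verify the two clauses of Assumption \ref{a:orthonormal} separately. Clause (i) is the classical statement that the Hermite functions in (\ref{eq:hermite_function}) form a complete orthonormal system in $\mathcal{L}^2$, while clause (ii) reduces to a uniform sup-norm bound on the Hermite functions combined with the fact that every element of $\mathcal{L}$ is a genuine probability density. I would organize the argument accordingly and isolate the one analytic input that is not elementary.

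For clause (i), I would first note that orthonormality is immediate from the orthogonality of the Hermite polynomials $H_j$ with respect to the weight $e^{-x^2}$ together with the normalizing constant in (\ref{eq:hermite_function}); a direct computation gives $\langle \psi_j,\psi_k\rangle = \delta_{jk}$. Completeness is standard: the linear span of $\{\psi_j\}$ equals the set of functions $p(x)e^{-x^2/2}$ with $p$ a polynomial, and this set is dense in $\mathcal{L}^2$ because any $g\in\mathcal{L}^2$ orthogonal to all of them has $\int g(x)e^{-x^2/2}x^j\,dx=0$ for every $j$, which forces the entire function $t\mapsto\int g(x)e^{-x^2/2}e^{itx}\,dx$ (well defined since $ge^{-x^2/2}\in\mathcal{L}^1$) to vanish identically, hence $g=0$. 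I would simply cite this classical fact (e.g., \citet{folland2007}), consistent with the expansion already invoked in the proof of Theorem \ref{theorem:approximation}.

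For clause (ii), fix $P\in\mathcal{P}$ and $f\in\mathcal{L}_0(P)\subset\mathcal{L}$. Since every member of $\mathcal{L}$ is a probability density, $f\ge0$ and $\int_{\mathbb{R}}f=1$, so by the $\mathcal{L}^{1}$--$\mathcal{L}^{\infty}$ bound
\[
|\langle f,\psi_j\rangle| = \left|\int_{\mathbb{R}} f(x)\psi_j(x)\,dx\right| \le \|\psi_j\|_\infty \int_{\mathbb{R}} f(x)\,dx = \|\psi_j\|_\infty
\]
for every $j$. The remaining ingredient is the uniform bound $\sup_{j\ge0}\|\psi_j\|_\infty \le 1.086435\,\pi^{-1/4}$, a classical inequality for Hermite functions (Cram\'er's inequality). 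Combining the two displays yields $\langle f,\psi_j\rangle\in[-1.086435\pi^{-1/4},\,1.086435\pi^{-1/4}]$ for each $j\le q$; since this holds coordinatewise and the bound depends on neither $P$ nor $f$, the coefficient vector lies in the hypercube on the right-hand side of (\ref{eq:hermite_theta}), hence in $\Theta^{q+1}$. This verifies Assumption \ref{a:orthonormal}(ii) for all $P\in\mathcal{P}$ simultaneously.

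The only non-routine step is the uniform sup-norm bound with the explicit constant $1.086435$. I would not reprove it but invoke the classical Hermite-function inequality; its proof rests on the Plancherel--Rotach (turning-point/Airy) asymptotics, which show that $\|\psi_j\|_\infty$ is largest for small $j$ and decays like $j^{-1/12}$, so that the supremum over all $j$ is finite and controlled by the stated constant. Everything else---orthonormality, completeness, and the duality bound using $\int f=1$ with $f\ge0$---is elementary, so this uniform estimate is where the substance lies.
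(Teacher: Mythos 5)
Your proposal is correct and takes essentially the same route as the paper's proof: clause (i) is handled by appealing to the classical completeness and orthonormality of the Hermite system (the paper cites \citet{blanchard/bruening:2002} where you cite the standard argument), and clause (ii) by exactly the paper's computation, combining Cram\'er's uniform bound $\sup_j \|\psi_j\|_\infty \le 1.086435\,\pi^{-1/4}$ with $f \ge 0$ and $\int f = 1$ to place each coefficient $\langle f,\psi_j\rangle$ in the hypercube of (\ref{eq:hermite_theta}). Your additional sketches (the entire-function completeness argument and the Plancherel--Rotach background for the constant) are accurate but play the same role as the paper's citations.
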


A proof is provided in Appendix \ref{sec:prop:hermite_assumption}.
We also present the following proposition which provides approximation bounds for the condition of Theorem \ref{theorem:approximation} to guarantee the containment of the identified set $\mathcal{L}_0(P)$ by the approximate identified set $\mathcal{L}_0^\ast(P)$ for every possible joint distribution $P \in \mathcal{P}$.

\begin{proposition}[Approximation Bounds]\label{prop:bounds}
Suppose that $\Psi=\{\psi_j:j=0,1,\ldots\}$ is the sequence of the Hermite functions given in (\ref{eq:hermite_function}) and $\Theta^{q+1}$ satisfies (\ref{eq:hermite_theta}).
If
\begin{equation}\label{eq:hermite_smoothness}
\int \left( \frac{d^2}{d x^2} (f^{(2)}(x)+x^2f(x)) + x^2 \cdot (f^{(2)}(x)+x^2f(x)) \right)^2 dx \leq M,
\end{equation}
then, for each $q=1,2,\ldots$, 
\begin{align}
&\sup_{f\in\mathcal{L}}\sup_{x\in I}\left|\sum_{j=q+1}^{\infty}\langle{ f,\psi_j }\rangle \cdot \psi_j(x)\right|
\leq
\frac{1.086435\pi^{-1/4}}{\sqrt{2q+3}}\sqrt{M\sum_{j=q+1}^{\infty}(2j+1)^{-3}},
\label{eq:hermite_smoothness_psi}\\
&\sup_{f\in\mathcal{L}}\sup_{t\in \mathbb{R}}\left|\sum_{j=q+1}^{\infty}\langle{ f,\psi_j }\rangle \cdot \phi_j(t)\right|
\leq
\frac{1.086435\pi^{-1/4}\sqrt{2\pi}}{\sqrt{2q+3}}\sqrt{M\sum_{j=q+1}^{\infty}(2j+1)^{-3}},
\qquad\text{and}
\label{eq:hermite_smoothness_phi}\\
&\sup_{f\in\mathcal{L}}\left|\sum_{j=q+1}^{\infty} \langle{ f,\psi_j }\rangle \cdot \left(i\phi_j(t)\cdot\mathbb{E}_P\left[Y_1\exp(itY_2)\right]-\phi_j^{(1)}(t)\cdot\mathbb{E}_P\left[\exp(itY_2)\right]\right)\right| \notag\\
&\leq
\frac{1.086435\pi^{-1/4}}{\sqrt{2\pi}}\cdot\sqrt{M\sum_{j=q+1}^{\infty}(2j+1)^{-3}}\left(\frac{\mathbb{E}_P\left[|Y_1|\right]}{\sqrt{2q+3}}+1\right)
\qquad\text{for all } t.
\label{eq:hermite_smoothness_moments}
\end{align}
\end{proposition}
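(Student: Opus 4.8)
The plan is to expand each $f\in\mathcal{L}$ in the Hermite basis, $f=\sum_{j\ge0}\langle f,\psi_j\rangle\psi_j$, and to convert the smoothness budget $M$ into decay in $j$ by repeatedly applying the harmonic-oscillator (Schr\"odinger) identity $\mathcal{A}\psi_j=(2j+1)\psi_j$, where $\mathcal{A}=-\tfrac{d^2}{dx^2}+x^2$. Since $\mathcal{A}$ is self-adjoint on $\mathcal{L}^2$ (the boundary terms in the integration by parts vanish for the rapidly decaying $f$ admitted by the smoothness class), I would write $\langle f,\psi_j\rangle=(2j+1)^{-2}\langle \mathcal{A}^2 f,\psi_j\rangle$ and recognize the integrand of (\ref{eq:hermite_smoothness}) as $(\mathcal{A}^2 f)(x)$, so that (\ref{eq:hermite_smoothness}) reads $\|\mathcal{A}^2 f\|_{\mathcal{L}^2}^2\le M$. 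Three further facts enter: the Cram\'er-type uniform bound $\sup_x|\psi_j(x)|\le 1.086435\,\pi^{-1/4}$, the Fourier eigenrelation $\phi_j=i^j\sqrt{2\pi}\,\psi_j$, and the ladder recursion $\psi_j'=\sqrt{j/2}\,\psi_{j-1}-\sqrt{(j+1)/2}\,\psi_{j+1}$.

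For (\ref{eq:hermite_smoothness_psi}) I would bound, for each fixed $x$,
\[
\Big|\sum_{j=q+1}^{\infty}\langle f,\psi_j\rangle\psi_j(x)\Big|\le 1.086435\,\pi^{-1/4}\sum_{j=q+1}^{\infty} (2j+1)^{-2}\,\big|\langle \mathcal{A}^2 f,\psi_j\rangle\big|,
\]
and then apply Cauchy--Schwarz together with Bessel's inequality $\sum_{j>q}|\langle\mathcal{A}^2 f,\psi_j\rangle|^2\le\|\mathcal{A}^2 f\|_{\mathcal{L}^2}^2\le M$. The factor $1/\sqrt{2q+3}$ comes from the elementary inequality $(2j+1)^{-4}\le(2q+3)^{-1}(2j+1)^{-3}$, valid for $j\ge q+1$, applied inside the Cauchy--Schwarz sum. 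Bound (\ref{eq:hermite_smoothness_phi}) is then immediate: since $|\phi_j(t)|=\sqrt{2\pi}\,|\psi_j(t)|$, its left-hand side is exactly $\sqrt{2\pi}$ times that of (\ref{eq:hermite_smoothness_psi}).

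For (\ref{eq:hermite_smoothness_moments}) I would split the summand using $|\mathbb{E}_P[Y_1 e^{itY_2}]|\le\mathbb{E}_P[|Y_1|]$ and $|\mathbb{E}_P[e^{itY_2}]|\le1$, so the sum is controlled by $\mathbb{E}_P[|Y_1|]\sum|\langle f,\psi_j\rangle|\,|\phi_j(t)|+\sum|\langle f,\psi_j\rangle|\,|\phi_j^{(1)}(t)|$. The first piece is handled exactly as in (\ref{eq:hermite_smoothness_phi}) and produces the $\mathbb{E}_P[|Y_1|]/\sqrt{2q+3}$ term. The second piece is the crux: writing $\phi_j^{(1)}=i^j\sqrt{2\pi}\,\psi_j'$ and invoking the recursion gives $|\psi_j'(t)|\le 1.086435\,\pi^{-1/4}\big(\sqrt{j/2}+\sqrt{(j+1)/2}\big)\le 1.086435\,\pi^{-1/4}\sqrt{2j+1}$, where the last step uses $\sqrt{j/2}+\sqrt{(j+1)/2}\le\sqrt{2j+1}$. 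This extra factor $\sqrt{2j+1}$ cancels one of the two powers gained from the Schr\"odinger identity, leaving $(2j+1)^{-3/2}$ inside the sum; Cauchy--Schwarz with Bessel then yields precisely $\sqrt{M\sum_{j>q}(2j+1)^{-3}}$ with \emph{no} additional $1/\sqrt{2q+3}$, which is the source of the ``$+1$'' in the final factor. The overall constant follows by carrying the $\sqrt{2\pi}$ factors from the Fourier eigenrelation.

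The main obstacle is precisely the differentiated term $\phi_j^{(1)}$: it is the only place where the clean $(2j+1)^{-2}$ decay is degraded to $(2j+1)^{-3/2}$, and controlling it requires both the ladder recursion and the sharp sup-norm bound on the Hermite functions, together with the observation that the resulting exponent is still square-summable against the $M$-budget. Everything else reduces to bookkeeping of the Fourier constants and to verifying that the boundary terms in the integrations by parts vanish, which is where the decay properties built into $\mathcal{L}$ and into the finiteness of (\ref{eq:hermite_smoothness}) are used.
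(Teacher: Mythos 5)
Your proposal is correct and takes essentially the same route as the paper's proof: both convert the smoothness budget into the coefficient decay $\sum_{j}(2j+1)^{4}\langle f,\psi_j\rangle^{2}\leq M$ via the Schr\"odinger identity (the paper by termwise differentiation of the Hermite expansion plus Parseval, you by self-adjointness of $\mathcal{A}$ plus Bessel, with the paper's weighted Jensen step being your Cauchy--Schwarz step in disguise), and both then use the Cram\'er sup-norm bound, the Fourier eigenrelation, and the ladder recursion with $\sqrt{j/2}+\sqrt{(j+1)/2}\leq\sqrt{2j+1}$ to obtain the three displays, including the loss of the extra $1/\sqrt{2q+3}$ on the $\phi_j^{(1)}$ term. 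Note only that carrying the $\sqrt{2\pi}$ from $\phi_j=i^{j}\sqrt{2\pi}\psi_j$, as you and the paper's intermediate steps both do, yields the constant $1.086435\pi^{-1/4}\sqrt{2\pi}$ in (\ref{eq:hermite_smoothness_moments}) rather than the stated $1.086435\pi^{-1/4}/\sqrt{2\pi}$ --- a factor-$2\pi$ discrepancy already present in the paper's own final display, so your argument is faithful to the paper's proof.
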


A proof is provided in Appendix \ref{sec:prop:bounds}.
An admissible function class in terms of smoothness restriction can be specified by (\ref{eq:hermite_smoothness}).
Note that this is analogous to the standard practice in the literature to work with Sovolev classes of functions.
With the function class specified in this way, equations (\ref{eq:hermite_smoothness_psi}) and (\ref{eq:hermite_smoothness_moments}) prescribe possible choices of the tolerance levels $\eta,\delta(t_1),\dots,\delta(t_L)$ which admit $\mathcal{L}_0(P) \subset \mathcal{L}_0^\ast(P)$ for all $P \in \mathcal{P}$ per Theorem \ref{theorem:approximation}.
See Section \ref{sec:tuning_parameters} for concrete choice procedures.
Equation (\ref{eq:hermite_smoothness_phi}) in addition suggests the worst approximation error for the characteristic function, which will be useful when we impose natural restrictions on the characteristic functions -- see Section \ref{sec:implementation}.
Finally, we present the following proposition showing an alternative representation of the function class specification (\ref{eq:hermite_smoothness}).

\begin{proposition}[Equivalent Smoothness Condition]\label{prop:equivalent_smoothness}
Suppose that $x \mapsto \frac{d^2}{d x^2} (f''(x)+x^2f(x)) + x^2 \cdot (f''(x)+x^2f(x))$ is $\mathcal{L}^2$ and $\lim_{|x| \rightarrow \infty} x^2 f(x) = \lim_{|x| \rightarrow \infty} x^2 f^{(1)}(x) = \lim_{|x| \rightarrow \infty} f^{(2)}(x) = \lim_{|x| \rightarrow \infty} f^{(3)}(x) = 0$. Then (\ref{eq:hermite_smoothness}) is equivalent to
\begin{equation}\label{eq:equivalent_smoothness}
\int \left|\int\left(x^4 - (2t^2 + it)x^2 -(2 - 4it)x + (t^4 + 2)\right)e^{itx}f(x)dx\right|^2 dt \leq M.
\end{equation}
\end{proposition}

A proof is provided in Appendix \ref{sec:prop:equivalent_smoothness}.
The components in the left-hand side of (\ref{eq:equivalent_smoothness}) are estimable by \cite{li/vuong:1998} and its extensions, and thus Proposition \ref{prop:equivalent_smoothness} provides a guideline for setting the smoothness bound parameter $M$ -- see Section \ref{sec:tuning_parameters}.

\subsection{Choice of Tuning Parameters}\label{sec:tuning_parameters}
In this section, we provide example procedures of choosing the smoothness bound $M$ and the tolerance levels $\eta, \delta(t_1),\dots,\delta(t_L) \in (0,\infty)$ in finite sample.
Although we present a data driven choice of the smoothness bound $M$ below, we remark that the smoothness bound $M \in (0,\infty)$ as well as the sieve dimension $q \in \mathbb{N}$ could be imposed by a researcher in the spirit of honest inference \citep[cf.][]{armstrong/kolesar:2018}.


{\bf Smoothness Bound $M$:}
We choose $M$ to satisfy the condition (\ref{eq:hermite_smoothness}) of Proposition \ref{prop:bounds}.
In light of Proposition \ref{prop:equivalent_smoothness}, we choose $M$ to satisfy (\ref{eq:equivalent_smoothness}).
By Minkowski's inequality, it is sufficient to have $M$ satisfy
\begin{align*}
&\left( \int \left|\int x^4 e^{itx} f(x)dx\right|^2 dt \right)^{1/2} + \left( \int \left| 2t^2 + it \right|^2 \left| \int x^2 e^{itx} f(x)dx \right|^2 dt \right)^{1/2} +
\\ 
& \left( \int \left| 2 - 4it \right|^2 \left| \int x e^{itx} f(x)dx \right|^2 dt \right)^{1/2} + \left( \int \left( t^4 + 2 \right)^2 \left| \int e^{itx} f(x)dx \right|^2 dt \right)^{1/2}
\leq M^{1/2}
\end{align*}
Such a bound may be obtained by the plug-in \citep[cf.][]{Schennach2015}.
Specifically, one can choose
\begin{align}
M = 
& \left[ \left( \int_{-T}^{T} \left| \widehat\varphi_X^{(4)}(t) \right|^2 dt \right)^{1/2} + \left( \int_{-T}^{T} \left| 2t^2 + it \right|^2 \left| \widehat\varphi_X^{(2)}(t) \right|^2 dt \right)^{1/2} + \right.
\notag\\ 
& \left. \left( \int_{-T}^{T} \left| 2 - 4it \right|^2 \left| \widehat\varphi_X^{(1)}(t) \right|^2 dt \right)^{1/2} + \left( \int_{-T}^{T} \left( t^4 + 2 \right)^2 \left| \widehat\varphi_X(t) \right|^2 dt \right)^{1/2} \right]^{2},
\label{eq:plug_in_M}
\end{align}
where $\widehat\varphi_X$, $\widehat\varphi_X^{(2)}$, and $\widehat\varphi_X^{(4)}$ can be computed based on \cite{li/vuong:1998} and its extensions -- see Appendix \ref{sec:li_vuong} and Appendix \ref{sec:derivative}, respectively.

{\bf Tolerance Level $\eta$:}
In light of Proposition \ref{prop:bounds}, we can choose the tolerance level $\eta$ in the following manner. 
By (\ref{eq:hermite_smoothness_moments}), we can satisfy condition (\ref{eq:approximation_psi}) of Theorem \ref{theorem:approximation} if $\frac{1.086435\pi^{-1/4}}{\sqrt{2q+3}}\sqrt{M\sum_{j=q+1}^{\infty}(2j+1)^{-3}} \leq \eta$.
Thus, having selected the smoothness bound $M$ and the sieve dimension $q$, one can set $\eta$ to
\begin{align}\label{eq:plug_in_eta}
\eta =
\frac{1.086435\pi^{-1/4}}{\sqrt{2q+3}}\sqrt{M\sum_{j=q+1}^{\infty}(2j+1)^{-3}}.
\end{align}


{\bf Tolerance Levels $\delta(t_1),\dots,\delta(t_L)$:}
In light of Proposition \ref{prop:bounds}, we can choose the tolerance levels $\delta(t_1),\dots,\delta(t_L)$ of approximation errors in the following manner. 
By (\ref{eq:hermite_smoothness_moments}), we can satisfy condition (\ref{eq:approximation_moments}) of Theorem \ref{theorem:approximation} if $\frac{1.086435\pi^{-1/4}}{\sqrt{2\pi}}\cdot\sqrt{M\sum_{j=q+1}^{\infty}(2j+1)^{-3}}\left(\frac{\mathbb{E}_P\left[|Y_1|\right]}{\sqrt{2q+3}}+1\right)\leq \delta(t_l)$ for all $l \in \{1,\dots,L\}$.
Thus, having selected the smoothness bound $M$ and the sieve dimension $q$, one can set $\delta(t_l)$ to
\begin{align}
\delta(t_l) =\frac{1}{\sqrt{2\pi}}\cdot\sqrt{M\sum_{j=q+1}^{\infty}(2j+1)^{-3}}\left(\frac{\mathbb{E}_n\left[|Y_1|\right]}{\sqrt{2q+3}}+1\right)
\end{align}
for all $l \in \{1,\dots,L\}$, where we replaced $1.086435\pi^{-1/4}$ by one for slackness accounting for estimation of $\mathbb{E}_P\left[|Y_1|\right]$ by $\mathbb{E}_n\left[|Y_1|\right]$.

{\bf Sieve Dimension $q$}: The sieve dimension $q$ can be chosen by adapting a bandwidth selection method suggested in \cite{bissantz/lutz/holzmann/munk:2007} in density deconvolution with known error distribution; similar bandwidth selection rules are also used in \cite{kato/sasaki:2018} and \cite{adusumilli/otsu/whang:2017} in the deconvolution literature. 
Given $q$, choose tolerance levels $\eta, \delta (t_{1}),\dots,\delta(t_{L})$ depending on $q$, and then construct the upper and lower functions $f^{U}(x) = f_{q}^{U}(x)$ and $f^{L}(x) = f_{q}^{L}(x)$ according to Algorithm \ref{algorithm:main}.
Then use the midpoint  $\hat{f}_{q}(x) = \{ f_{q}^{U}(x) + f_{q}^{L}(x) \}/2$  as a surrogate of a point estimate of $f(x)$.
Realizing that a sieve dimension corresponds to the reciprocal of a bandwidth, we suggest the following rule to choose $q$. Construct a candidate set for $q$ as $\{ q_{\min}, \dots, q_{\max} \}$, and compute the $L^{\infty}$-distance  between the density estimates with adjacent sieve dimensions, $d_{q,q+1}^{(\infty)} = \sup_{x \in I} |\hat{f}_{q+1}(x) - \hat{f}_{q}(x)|$. Then we choose the smallest $q$ such that $d_{q,q+1}^{(\infty)}$ is larger than $\rho d_{q_{\min},q_{\min + 1}}^{(\infty)}$ for some $\rho > 1$ (or alternatively we can choose the largest $q$ such that $d_{q,q+1}^{(\infty)}$ is smaller  than $\rho d_{q_{\min},q_{\min + 1}}^{(\infty)}$). In practice, it is  recommended
to make use of visual information on how $d_{q,q+1}^{(\infty)}$ behaves as $q$ decreases when
determining the sieve dimension.

\subsection{Implementation}\label{sec:implementation}

When we implement the Anderson-Rubin-type inference, we would generally sweep across the parameter set $\Theta^{q+1}$ of sieve coefficients, and this operation may demand long computational time.
However, we do not need to conduct the test at every point in $\Theta^{q+1}$, because properties of probability density functions and characteristic functions together with the sieve approximation property rule out substantially many elements of $\Theta^{q+1}$.
From the property $f \geq 0$ of probability density functions and the approximation bound (\ref{eq:hermite_smoothness_psi}), we can impose the restriction
\begin{equation}\label{eq:constraint_nonnegative_density}
\boldsymbol{\psi}(x)^T \boldsymbol{\theta} \geq -\frac{1.086435\pi^{-1/4}}{\sqrt{2q+3}}\sqrt{M\sum_{j=q+1}^{\infty}(2j+1)^{-3}}
\qquad\text{for all } x \in I.
\end{equation}
Similarly, from the property $\left[ \mathcal{F}f \right](0) =1$ of characteristic functions and the approximation bound (\ref{eq:hermite_smoothness_phi}), we can impose the restriction
\begin{equation}\label{eq:constraint_characteristic_function}
\left| \left[\mathcal{F} \boldsymbol{\psi}^T \boldsymbol{\theta}\right] (0) -1 \right| \leq \frac{1.086435\pi^{-1/4}\sqrt{2\pi}}{\sqrt{2q+3}}\sqrt{M\sum_{j=q+1}^{\infty}(2j+1)^{-3}}.
\end{equation}
Since the Hermite function is an eigenfunction of $\mathcal{F}$, (\ref	
{eq:constraint_characteristic_function}) can be simplified when the Hermite orthonormal basis $\boldsymbol{\Psi}$ (see Section \ref{sec:sieve}) is used.
Specifically, (\ref{eq:constraint_characteristic_function}) reduces to
\begin{equation}\label{eq:constraint_characteristic_function_hermite}
\left| \sqrt{2\pi} \boldsymbol{\psi}(0)^T \text{diag}\left(1,0,-1,0\dots\right) \boldsymbol{\theta} -1 \right| \leq \frac{1.086435\pi^{-1/4}\sqrt{2\pi}}{\sqrt{2q+3}}\sqrt{M\sum_{j=q+1}^{\infty}(2j+1)^{-3}}.
\end{equation}
Note that the left-hand side of (\ref{eq:constraint_characteristic_function_hermite}) does not require to compute an integral unlike that of (\ref{eq:constraint_characteristic_function}), which is a major advantage of using the Hermite orthonormal basis in the context of deconvolution.
Use of the constraints (\ref{eq:constraint_nonnegative_density}) and (\ref{eq:constraint_characteristic_function})/(\ref{eq:constraint_characteristic_function_hermite}) is motivated by the definition of the confidence band (\ref{eq:confidence_band}) consisting only of ``density functions'' $f \in \mathcal{L}$ which indexes the set $\mathbf{B}_{q+1,\eta}(f)$ of possible values of $\boldsymbol{\theta}$.

We also remark that we do not need to conduct a grid search for the purpose of drawing confidence bands.
In fact, solving $\min_{\boldsymbol{\theta} \in \Theta^{q+1}} \boldsymbol{\psi}(x)^T \boldsymbol{\theta}$ subject to $T(\boldsymbol{\theta}) \leq c(\alpha)$ as well as (\ref{eq:hermite_theta}), (\ref{eq:constraint_nonnegative_density}), and (\ref{eq:constraint_characteristic_function})/(\ref{eq:constraint_characteristic_function_hermite}) yields the lower bound of the confidence band up to the approximation error $\eta$.
Similarly, solving $\max_{\boldsymbol{\theta} \in \Theta^{q+1}} \boldsymbol{\psi}(x)^T \boldsymbol{\theta}$ subject to $T(\boldsymbol{\theta}) \leq c(\alpha)$ as well as (\ref{eq:hermite_theta}), (\ref{eq:constraint_nonnegative_density}), and (\ref{eq:constraint_characteristic_function})/(\ref{eq:constraint_characteristic_function_hermite}) yields the upper bound of the confidence band up to the approximation error $\eta$.
Accounting for these points, we propose the following implementation algorithm.

\begin{algorithm}\label{algorithm:appendix}${}$
\begin{enumerate}
\item Choose the tuning parameters according to Section \ref{sec:tuning_parameters}.
\item For each $x \in I$, compute $f^L(x) = \min_{\boldsymbol{\theta} \in \Theta^{q+1}} \boldsymbol{\psi}(x)^T \boldsymbol{\theta}$ subject to $T(\boldsymbol{\theta}) \leq c(\alpha,\boldsymbol{\theta})$, (\ref{eq:hermite_theta}), (\ref{eq:constraint_nonnegative_density}), \& (\ref{eq:constraint_characteristic_function_hermite}).
\item For each $x \in I$, compute $f^U(x) = \max_{\boldsymbol{\theta} \in \Theta^{q+1}} \boldsymbol{\psi}(x)^T \boldsymbol{\theta}$ subject to $T(\boldsymbol{\theta}) \leq c(\alpha,\boldsymbol{\theta})$, (\ref{eq:hermite_theta}), (\ref{eq:constraint_nonnegative_density}), \& (\ref{eq:constraint_characteristic_function_hermite}).
\item The confidence band is set to $\left[ f^L(x) - \eta, \ f^U(x) + \eta \right]$, $x \in I$.
\end{enumerate}
\end{algorithm}

We remark that, in computing the test statistic $T(\boldsymbol{\theta})$ in steps 2 and 3 of the algorithm above, the use of the Hermite orthonormal basis element $\psi = \psi_j$, $j=0,1,\dots$ simplifies (\ref{eq:def_R}) and (\ref{eq:def_I}) to
\begin{align*}
{R}_{\psi_j,t}(y_1,y_2)=\sqrt{2\pi} &\left\{-\cos(ty_2)(y_1\mathrm{Im}(i^j {\psi}_j(t))+\mathrm{Re}(i^j {\psi}_j^{(1)}(t)))\right.
\notag\\
& \left. \ \ -\sin(ty_2)(y_1\mathrm{Re}(i^j {\psi}_j(t))-\mathrm{Im}(i^j {\psi}_j^{(1)}(t))) \right\}
&&\text{and}
\\
{I}_{\psi_j,t}(y_1,y_2)= \sqrt{2\pi} &\left\{\cos(ty_2)(y_1\mathrm{Re}(i^j {\psi}_j(t))+\mathrm{Im}(i^j {\psi}_j^{(1)}(t)))\right.
\notag\\
& \left. \ \ -\sin(ty_2)(y_1\mathrm{Im}(i^j {\psi}_j(t))-\mathrm{Re}(i^j {\psi}_j^{(1)}(t))) \right\},
&&
\end{align*}
respectively.
As such, one need not compute an integral to obtain the test statistic $T(\boldsymbol{\theta})$.
This convenient property again follows from the fact that the Hermite function is an eigenfunction of $\mathcal{F}$.

\section{Simulation Studies}\label{sec:simulation}
In this section, we present and discuss finite-sample performance of the proposed method by simulation studies.
Simulation outcomes that we present include the size under the null of the true distribution, the power under alternative distributions, and the lengths of confidence bands.
The lengths will be further decomposed into the bias bound $\eta$ and the remaining lengths due to the stochastic part.

\subsection{Simulation Setting}

We employ three distribution families to generate the latent variable $X$ -- the normal distribution, the skew normal distribution, and the $t$ distribution.
We employ the skew normal distribution and the $t$ distribution to see whether our method is effective for asymmetric distributions and super-Gaussian tails, respectively.
Specifically, we generate a random sample of $(X,U_1,U_2)$ mutually independently according to the marginal laws:
\begin{align*}
\text{Model 1:} \quad & X \sim N(\xi_1,\xi_2^2), & U_1 \sim N(0,\sigma_{U_1}^2), & \quad U_2 \sim N(0,\sigma_{U_2}^2) &
\\
\text{Model 2:} \quad & X \sim SN(\xi_1,\xi_2,\xi_3), & U_1 \sim N(0,\sigma_{U_1}^2), & \quad U_2 \sim N(0,\sigma_{U_2}^2) &
\\
\text{Model 3:} \quad & X \sim t_{\xi_4}, & U_1 \sim N(0,\sigma_{U_1}^2), & \quad U_2 \sim N(0,\sigma_{U_2}^2) &
\end{align*}
Here, $N(\xi_1,\xi_2^2)$ denotes the normal distribution with mean $\xi_1$ and variance $\xi_2^2$, $SN(\xi_1,\xi_2,\xi_3)$ denotes the skew normal distribution with location $\xi_1$, scale $\xi_2$, and shape $\xi_3$, and $t_{\xi_4}$ denotes the $t$ distribution with $\xi_4$ degrees of freedom.
The distribution parameters for the latent variable $X$ are set to $(\xi_1,\xi_2)=(0,1)$ for Model 1, $(\xi_1,\xi_2,\xi_3)=(0,1,1)$ for Model 2, and $\xi_4=5$ for Model 3.
The choice of the normal error distribution, which is an instance of super-smooth distributions, imposes a difficult case in deconvolution -- see \cite{li/vuong:1998}.
The error variance parameters are set to $\sigma_{U_1} = \sigma_{U_2} = 0.5$ in each of the three models.
We conduct experiments with three sample sizes $n=$ 250, 500, and 1,000, and run 2,500 Monte Carlo iterations for each set of simulations.

We follow the practical guideline provided in Appendix \ref{sec:practical} to construct confidence bands.
As remarked previously, the smoothness bound $M \in (0,\infty)$ as well as the sieve dimension $q \in \mathbb{N}$ can be imposed by a researcher in the spirit of honest inference \citep[cf.][]{armstrong/kolesar:2018} -- see Section \ref{sec:tuning_parameters}.
We experiment with the tuning parameters $q \in \{5,7,9\}$.
The function classes are defined by (\ref{eq:hermite_smoothness}) with $M=15$, $25$, and $35$ for Models 1, 2, and 3, respectively.
The frequency bound is set to $T=5$ and the number of frequency grid points is set to $L=50$.
The interval on which the confidence band is formed is set to $I = \left[\mathbb{E}[X]-2\sqrt{\mathrm{Var}(X)},\mathbb{E}[X]+2\sqrt{\mathrm{Var}(X)}\right]$, where $\mathbb{E}[X]$ and $\mathrm{Var}(X)$ are the theoretical mean and the theoretical variance, respectively, of $X$ under the relevant model.
To enjoy favorable speed of computation for numerous Monte Carlo iterations, we use the conservative critical value $c(\alpha)$. The level is set to $\alpha = 0.05$ throughout.

\subsection{Simulation Results}

Figure \ref{fig:model1} (A) shows the simulated frequencies that the confidence band formed under Model 1 covers alternative probability density functions for $N(\xi_1,\xi_2^2)$ indexed by location parameter values $\xi_1 \in [0.0,1.0]$ while the scale parameter is fixed at the true value $\xi_2=1.0$.
The coverage frequency under $\xi_1=0.0$ indicate (the complement of) the size, whereas the coverage frequencies under $\xi_1 \in (0.0,1.0]$ indicate (the complement of) the power.
Similarly, Figure \ref{fig:model1} (B) shows the simulated frequencies that the confidence band formed under Model 1 covers alternative probability density functions for $N(\xi_1,\xi_2^2)$ indexed by scale parameter values $\xi_2 \in [1.0,2.0]$ while the location parameter is fixed at the true value $\xi_1=0.0$.
These results show the correct size and increasing power.
The size entails over-coverage, which is still consistent with our theory on size control.

\begin{figure}[tbp]
	\centering
		(A) Coverage frequencies with $q=7$ under location alternatives in Model 1\\
		\includegraphics[width=0.6\textwidth]{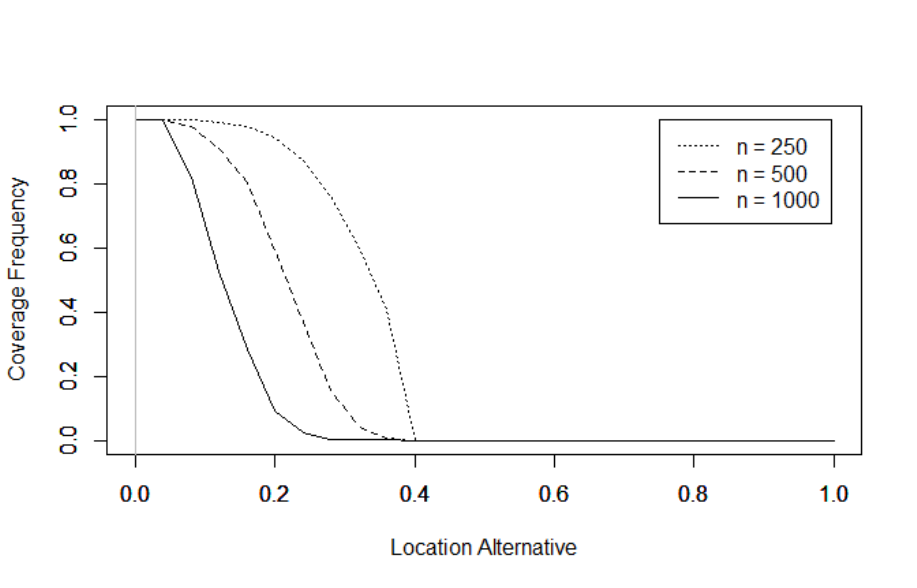}\bigskip\\
		(B) Coverage frequencies with $q=7$ under scale alternatives in Model 1\\
		\includegraphics[width=0.6\textwidth]{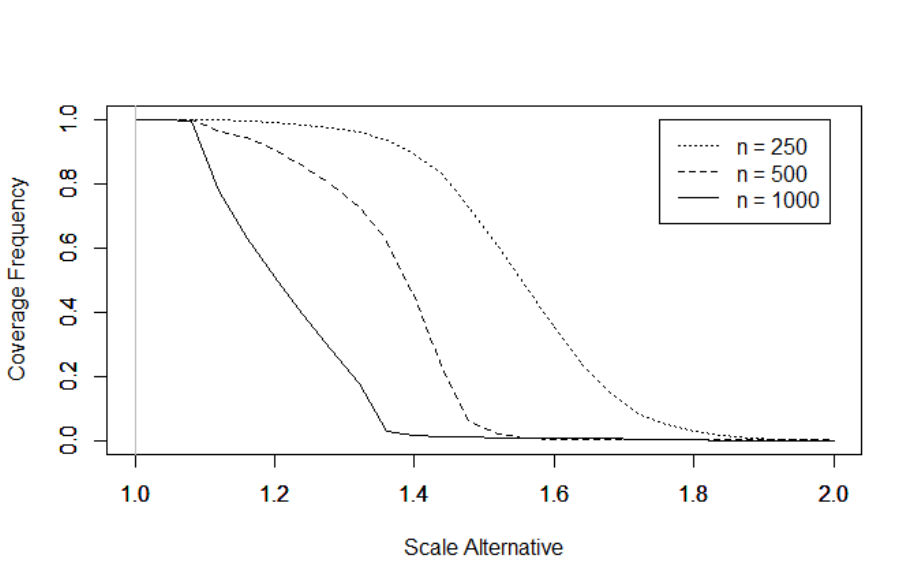}
	\caption{The simulated frequencies that the confidence band formed under Model 1 covers alternative probability density functions for $N(\xi_1,\xi_2^2)$ with the tuning parameter $q=7$. Panel (A) runs across alternative location parameter values $\xi_1 \in [0.0,1.0]$ while the scale parameter is fixed at the true value $\xi_2=1.0$. Panel (B) runs across alternative scale parameter values $\xi_2 \in [1.0,2.0]$ while the location parameter is fixed at the true value $\xi_1=0.0$.}
	\label{fig:model1}
\end{figure}

\begin{figure}[tbp]
	\centering
		(A) Coverage frequencies with $q=7$ under location alternatives in Model 2\\
		\includegraphics[width=0.6\textwidth]{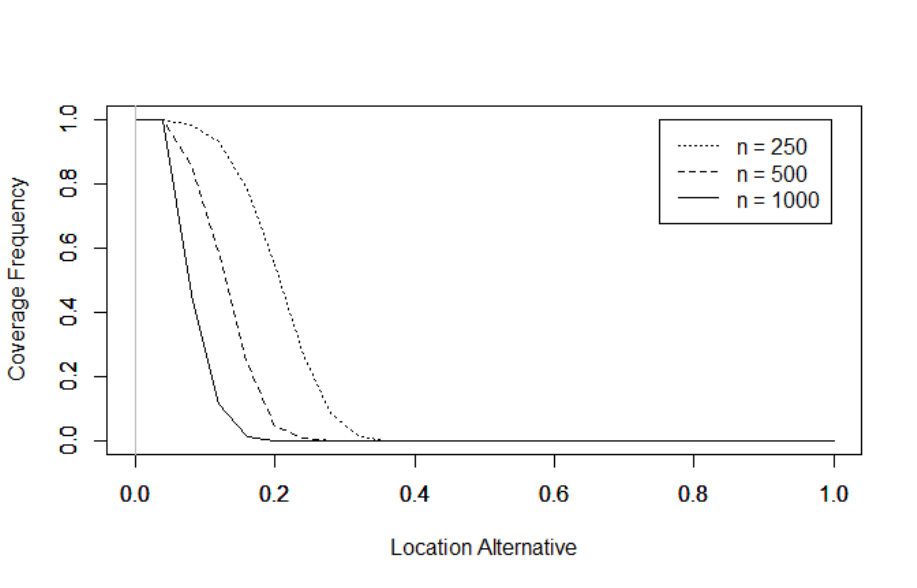}\bigskip\\
		(B) Coverage frequencies with $q=7$ under location alternatives in Model 3\\
		\includegraphics[width=0.6\textwidth]{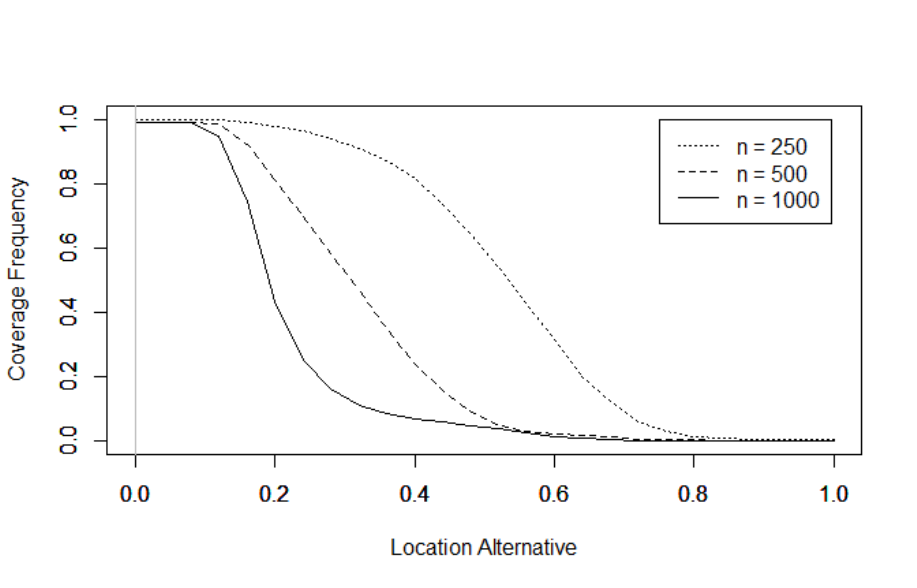}
		\caption{(A) The simulated frequencies that the confidence band formed under Model 2 covers alternative probability density functions for $SN(\xi_1,\xi_2,\xi_3)$ indexed by the alternative location parameter values $\xi_1 \in [0.0,1.0]$ while the scale and shape parameters are fixed at $(\xi_2,\xi_3)=(1.0, 1.0)$. (B) The simulated frequencies that the confidence band formed under Model 3 covers alternative probability density functions for the (non-) central t-distributions indexed by the alternative locations in $[0.0,1.0]$ while the degrees of freedom is fixed at the true value $5$. Results in both panel (A) and panel (B) are based on the tuning parameter $q=7$}
	\label{fig:model2model3}
\end{figure}

Figures \ref{fig:model2model3} (A) and \ref{fig:model2model3} (B) show analogous results to Figure \ref{fig:model1} (A) except that Model 2 and Model 3, respectively, are used instead of Model 1.
For Model 2, the shape parameter is fixed at the true value $\xi_3=1.0$.
These results evidence that the proposed method is similarly effective for the cases where the latent variable $X$ follows asymmetric distributions or distributions with super-Gaussian tails.



We next present average lengths of the confidence bands on $I$, and their decomposition into the bias bound $\eta$ and the remaining length due to the stochastic part.
Table \ref{tab:lengths} summarizes results on the lengths.
There are a couple of features in these results that deserve discussions.
First, observe that the confidence bands shrink as sample size increases when the sieve dimension $q$ is held fixed.
On the other hand, the bias bound $\eta$ remains invariant across sample sizes while the sieve dimension $q$ is fixed.
These results are natural features of our approach.
Second, observe that the bias bound $\eta$ decreases as the sieve dimension $q$ increments.

\begin{table}[tbp]
	\centering
	\scalebox{1}{
		\begin{tabular}{cccccccc}
		\hline\hline
		  &&     &     && Average & Supremum & Stochastic \\
			&& $q$ & $n$ && Length & Bias ($2\eta$) & Length \\
		\hline
			Model 1 && 5 &   250 && 0.290 & 0.073 & 0.217\\
			        && 5 &   500 && 0.237 & 0.073	& 0.164\\
							&& 5 & 1,000 && 0.193 & 0.073	& 0.120\\
		\cline{3-8}
			Model 1 && 7 &   250 && 0.263 & 0.048	& 0.215\\
			        && 7 &   500 && 0.196 & 0.048 & 0.148\\
							&& 7 & 1,000 && 0.144 & 0.048	& 0.097\\
		\cline{3-8}
			Model 1 && 9 &   250 && 0.235 & 0.034 & 0.200\\
			        && 9 &   500 && 0.168 & 0.034 & 0.134\\
							&& 9 & 1,000 && 0.120 & 0.034 & 0.086\\
		\hline
			Model 2 && 5 &   250 && 0.311 & 0.094	& 0.217\\
			        && 5 &   500 && 0.244 & 0.094	& 0.150\\
							&& 5 & 1,000 && 0.194 & 0.094 & 0.100\\
		\cline{3-8}
			Model 2 && 7 &   250 && 0.288 & 0.062	& 0.226\\
			        && 7 &   500 && 0.214 & 0.062	& 0.152\\
							&& 7 & 1,000 && 0.155 & 0.062	& 0.094\\
		\cline{3-8}
			Model 2 && 9 &   250 && 0.255 & 0.044 & 0.210\\
			        && 9 &   500 && 0.186 & 0.044 & 0.142\\
							&& 9 & 1,000 && 0.126 & 0.044 & 0.081\\
		\hline
			Model 3 && 5 &   250 && 0.349 & 0.111 & 0.238\\
			        && 5 &   500 && 0.291 & 0.111 & 0.180\\
							&& 5 & 1,000 && 0.252 & 0.111 & 0.140\\
		\cline{3-8}
			Model 3 && 7 &   250 && 0.341 & 0.073	& 0.268\\
			        && 7 &   500 && 0.260 & 0.073	& 0.187\\
							&& 7 & 1,000 && 0.205 & 0.073	& 0.132\\
		\cline{3-8}
			Model 3 && 9 &   250 && 0.351 & 0.053 & 0.299\\
			        && 9 &   500 && 0.242 & 0.053 & 0.190\\
							&& 9 & 1,000 && 0.179 & 0.053 & 0.126\\
		\hline\hline
		\end{tabular}
	}
	\caption{Average lengths of confidence bands, the supremum biases ($\eta$), and the lengths for stochastic parts of confidence bands.}
	\label{tab:lengths}
\end{table}

\begin{figure}[tbp]
	\centering
	\begin{tabular}{ccc}
		& $n=$ 250\\
		Model 1 & Model 2 & Model 3\\
		\includegraphics[width=0.3\textwidth]{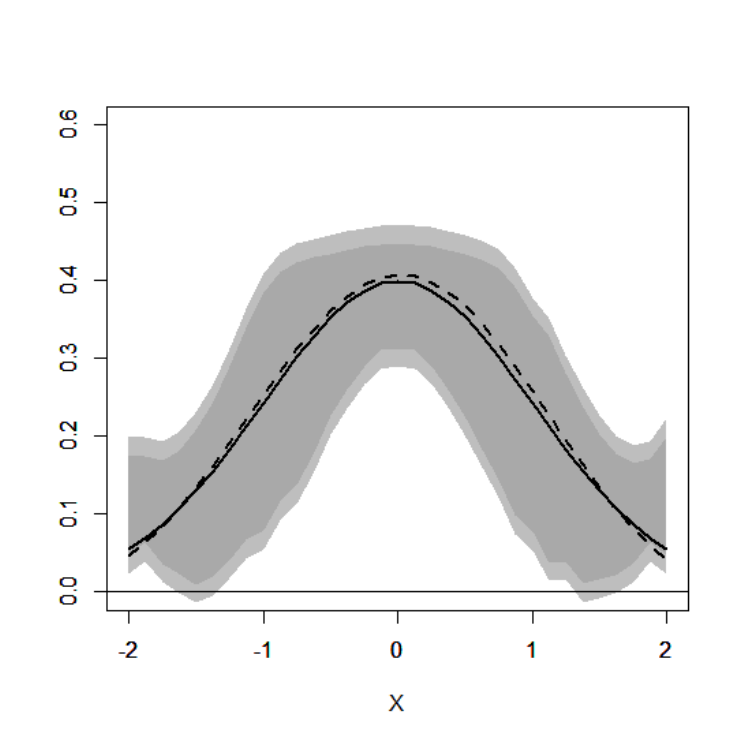}&
		\includegraphics[width=0.3\textwidth]{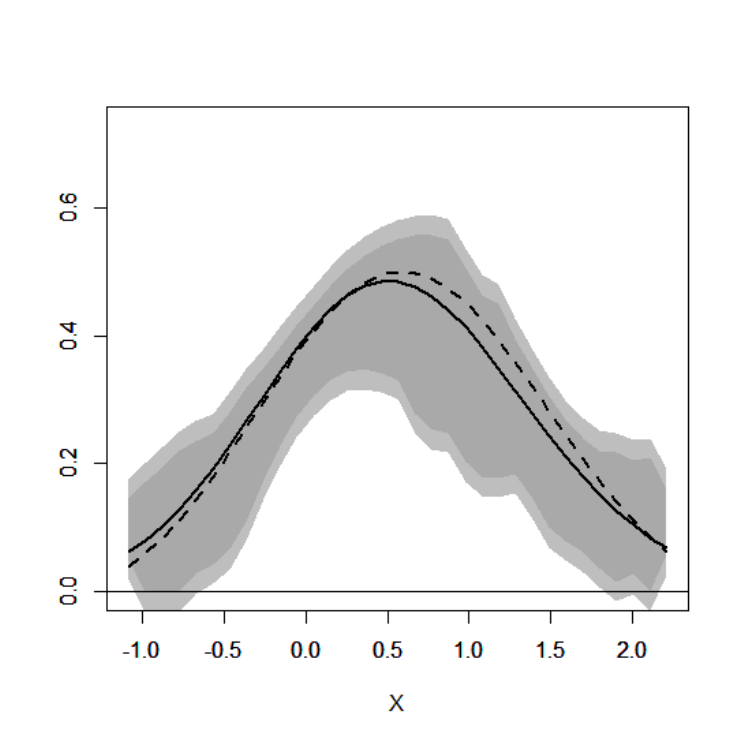}&
		\includegraphics[width=0.3\textwidth]{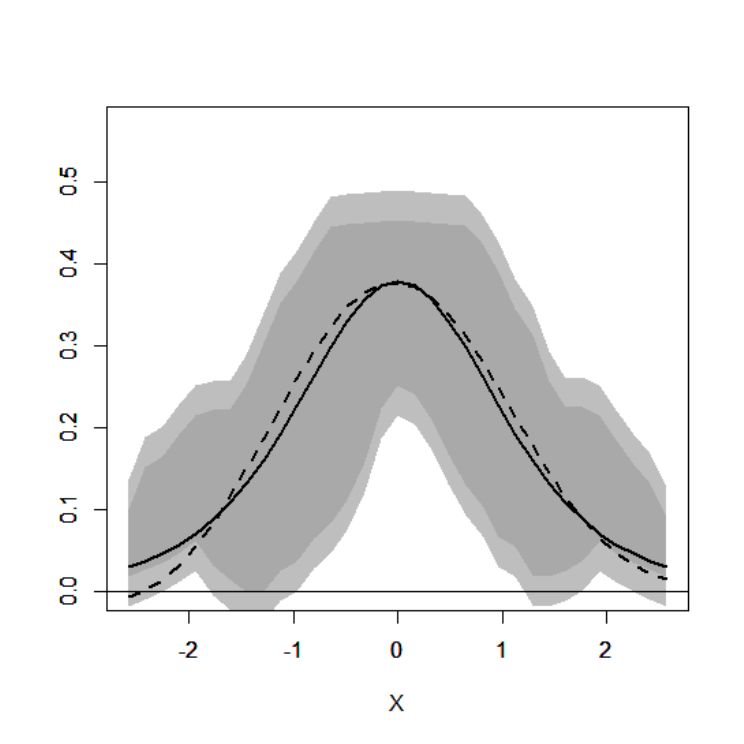}\\
		& $n=$ 500\\
		Model 1 & Model 2 & Model 3\\
		\includegraphics[width=0.3\textwidth]{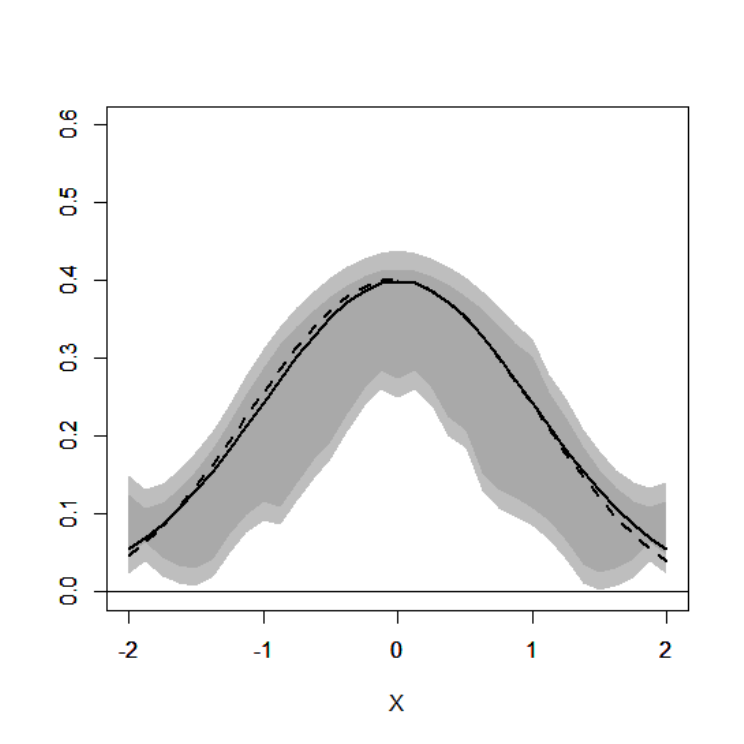}&
		\includegraphics[width=0.3\textwidth]{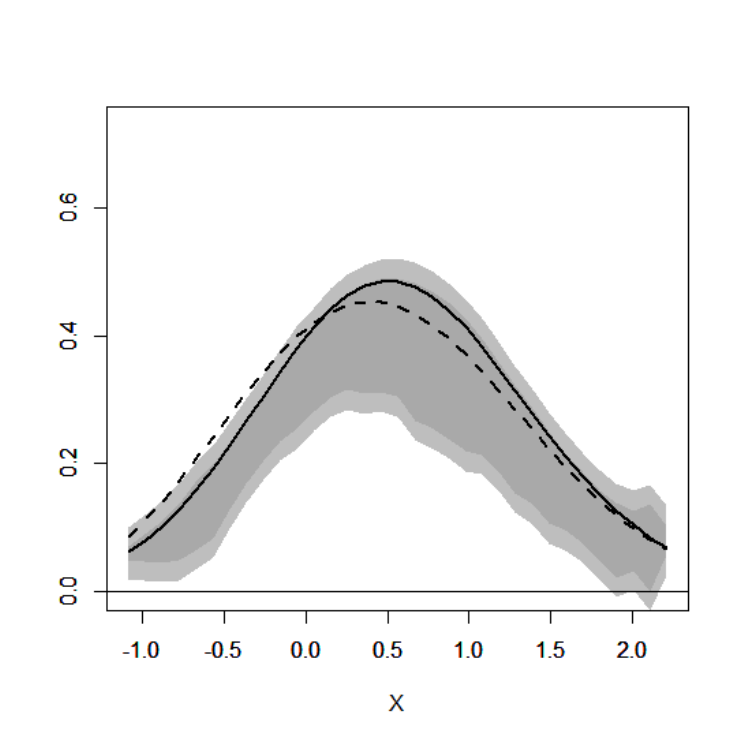}&
		\includegraphics[width=0.3\textwidth]{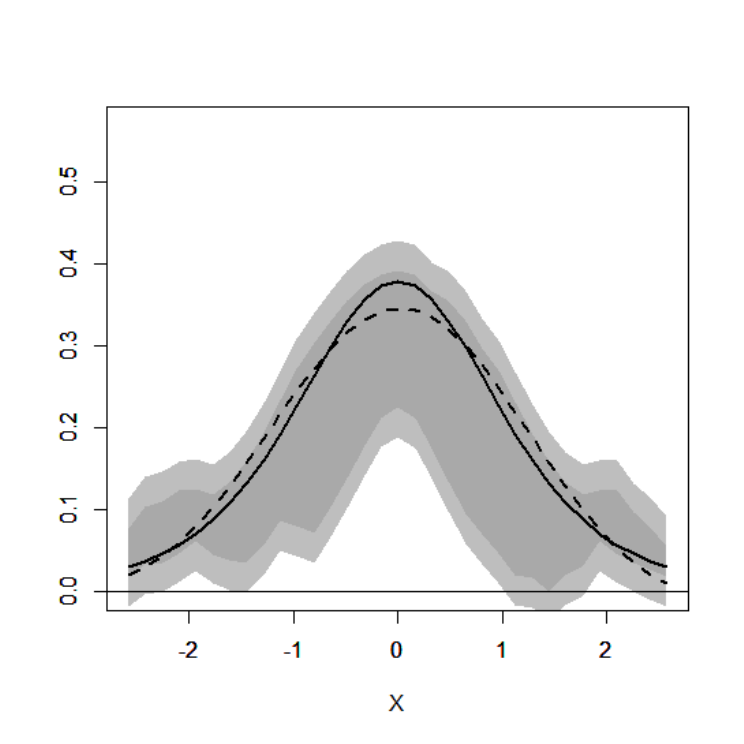}\\
		& $n=$ 1,000\\
		Model 1 & Model 2 & Model 3\\
		\includegraphics[width=0.3\textwidth]{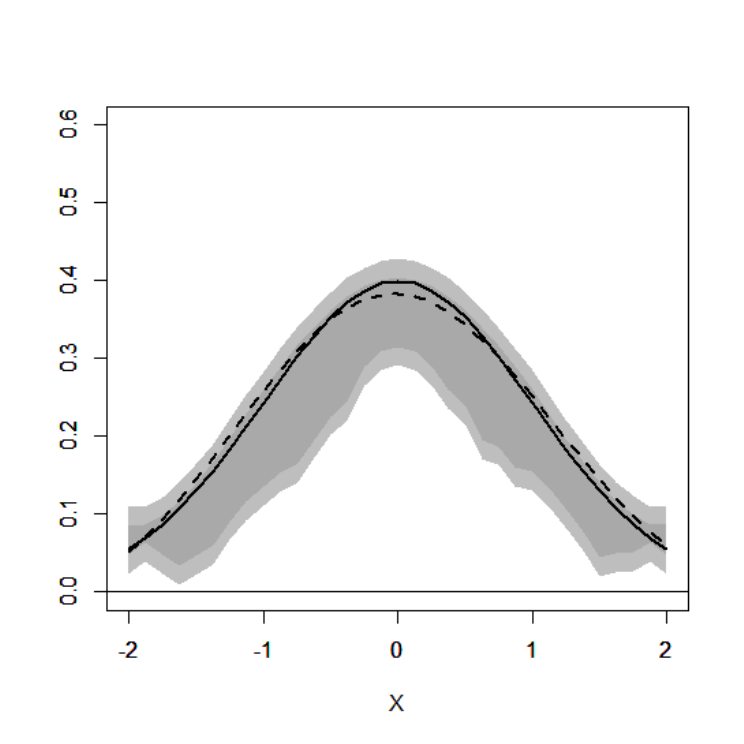}&
		\includegraphics[width=0.3\textwidth]{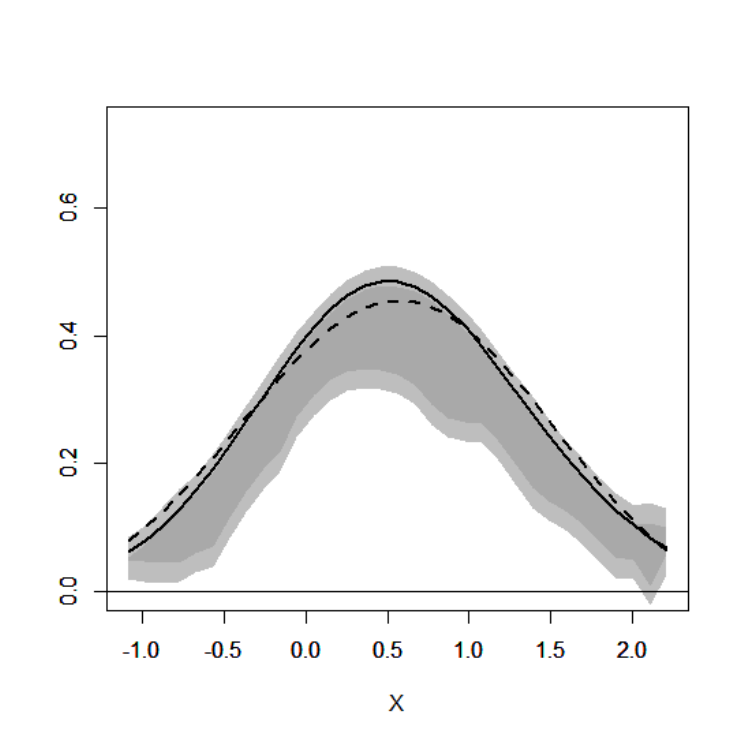}&
		\includegraphics[width=0.3\textwidth]{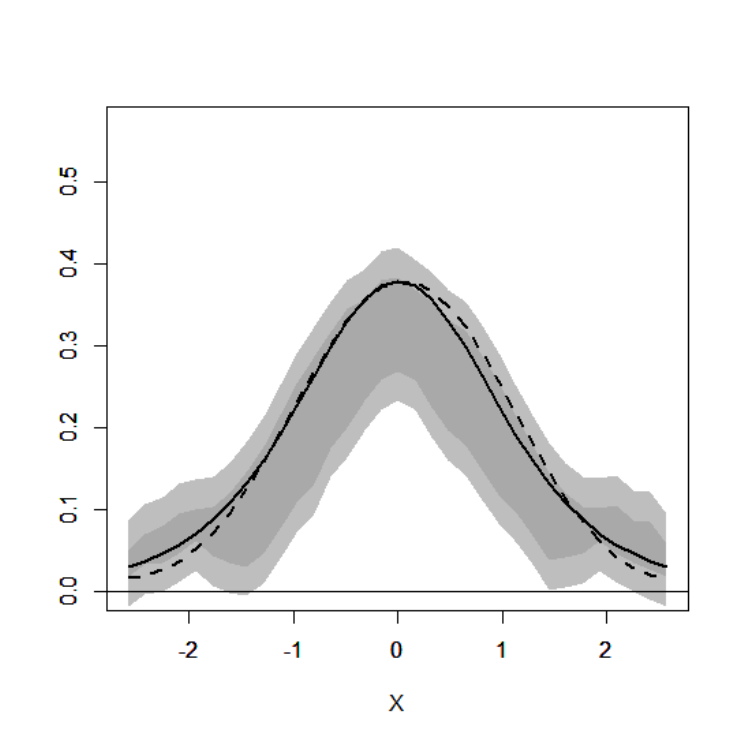}
	\end{tabular}
	\caption{Instances of confidence bands. The gray shades indicate the confidence bands. The internal dark gray shades indicate the stochastic parts of the confidence bands. The solid and dashed curves indicate the true density functions and Li-Vuong estimates, respectively.}
	\label{fig:instances}
\end{figure}

Finally, we display instances of confidence bands in Figure \ref{fig:instances}.
The gray shades indicate the confidence bands including the bias bound and the stochastic parts together.
The internal dark gray shades include only the stochastic parts.
We also plot the true density functions and Li-Vuong estimates (with the choice of tuning parameter according to Appendix \ref{sec:tuning_parameters_li_vuong}) as solid and dashed curves, respectively.
While such instances of confidence bands will not tell us any evidence on the statistical properties, they at least inform how a confidence band may look in applications.

\subsection{Simulations with Nonparametric Bootstrap}
In this paper, we propose a novel inference method based on Kotlarski's identity.
Under the absence of an inference method prior to our present paper, some existing applied papers that use Kotlarski's identity conduct the nonparametric bootstrap of the Li-Vuong estimator to draw confidence intervals though there is no theoretical guarantee for such a method to work.
In the current subsection, we use simulation studies to assess the coverage performance for such a bootstrap approach.
For the purpose of comparison, we use the same data generating processes as in the previous subsection, namely Model 1, Model 2, and Model 3.

Table \ref{tab:naive_bootstrap} summarizes the simulated frequencies that the na\"ive bootstrap confidence interval covers the true probability density function at locations $x \in \mathbb{E}[X]+\{-2,1,0,1,2\}\sqrt{\mathrm{Var}(X)}$.
The tuning parameter is selected based on a commonly used approach in this literature -- see Appendix \ref{sec:tuning_parameters_li_vuong}.
The coverage frequencies are close to the nominal probabilities only near the tails of the distributions, e.g., $x = \mathbb{E}[X] \pm 2\sqrt{\mathrm{Var}(X)}$, under Model 1 and Model 2.
On the other hand, the na\"ive bootstrap suffers from under-coverage near the center of the distributions under Model 1 and Model 2.
Furthermore, the na\"ive bootstrap suffers from even more severe under-coverage both near the center of the distribution and near the tails of the distribution under Model 3.
These results suggest that one should substitute our proposed method for the traditional na\"ive bootstrap approach.

\begin{table}[tbp]
	\centering
		\begin{tabular}{lccccccc}
		\hline\hline
		          &       && \multicolumn{5}{c}{$(x-\mathbb{E}[X])/\sqrt{\mathrm{Var}(X)}$}\\
		\cline{4-8}
		          & $n$   && $-2$ & $-1$ & $0$ & $1$ & $2$\\
		\hline
			Model 1 & 1,000 && 0.951 & 0.911 & 0.816 & 0.904 & 0.946\\
			        & 2,000 && 0.952 & 0.887 & 0.792 & 0.884 & 0.957\\
			        & 4,000 && 0.947 & 0.861 & 0.743 & 0.861 & 0.955\\
		\hline
			Model 2 & 1,000 && 0.949 & 0.929 & 0.848 & 0.885 & 0.956\\
			        & 2,000 && 0.951 & 0.908 & 0.796 & 0.862 & 0.940\\
			        & 4,000 && 0.946 & 0.888 & 0.748 & 0.820 & 0.934\\
		\hline
			Model 3 & 1,000 && 0.576 & 0.519 & 0.397 & 0.559 & 0.602\\
			        & 2,000 && 0.477 & 0.467 & 0.287 & 0.456 & 0.465\\
			        & 4,000 && 0.416 & 0.383 & 0.172 & 0.388 & 0.416\\
		\hline\hline
		\end{tabular}
	\caption{The simulation frequencies that the traditional na\"ive bootstrap confidence interval covers the true probability density function at locations $x \in \mathbb{E}[X]+\{-2,1,0,1,2\}\sqrt{\mathrm{Var}(X)}$.}
	\label{tab:naive_bootstrap}
\end{table}

\section{Conclusion}\label{sec:conclusion}
Since its introduction to econometrics by \citet*{li/vuong:1998}, Kotlarski's identity \citep*{kotlarski1967} -- see also \citet{rao:1992} -- has been widely used in empirical economics.
Examples include applications to empirical auctions \citep[e.g.,][]{LiPerrigneVuong2000,krasnokutskaya:2011}, income dynamics \citep[e.g.,][]{bonhomme/robin:2010}, and labor economics \citep[e.g.,][]{cunha/heckman/navarro:2005,cunha/heckman/schennach:2010,bonhomme/sauder:2011,kennan/walker:2011}.
Despite its popular use in applications, a method of inference based on Kotlarski's identity has long been missing in the literature.
After twenty years since \citet*{li/vuong:1998}, we now propose a method of inference based on Kotlarski's identity.
Specifically, we develop confidence bands for the probability density function $f_X$ of $X$ in the repeated measurement model where two measurements $(Y_1,Y_2)$ of unobserved variable $X$ are available in data with additive independent errors, $U_1 = Y_1-X$ and $U_2 = Y_2-X$.

Our construction of confidence bands can be summarized as follows.
First, we derive linear complex-valued moment restrictions based on Kotlarski's identity.
Second, we let the Hermite polynomial sieve approximate unknown probability density functions.
Third, for a given sieve dimension and for a given class for probability density functions, we compute a bias bound for the linear complex-valued moment restrictions, and slack the linear complex-valued moment restrictions by this bias bound.
Fourth, we compute the uniform norm of the self-normalized process of the slacked linear complex-valued moment restrictions as the test statistics for each point in a set of sieve coefficients.
Fifth, inverting this test statistic yields a confidence set of sieve approximations to possible probability density functions.
Sixth, for a given sieve dimension and for a given class for probability density functions, we compute a bias bound for sieve approximations of probability density functions, and the desired confidence band is obtained by uniformly enlarging the set of sieve approximations by this bias bound.

We not only provide a method that works, but also care for its practicality.
The Fourier transform and the inverse Fourier transform operations are known to be computationally costly in the deconvolution literature.
By exploiting the property of the Hermite functions as eigen-functions of the Fourier transform operator, we propose to let the Hermite polynomial sieve approximate both the density and characteristic functions without having to implement numerical integrations within each iteration of a numerical optimization routine.
This convenient feature of the proposed method saves computational resources.
Furthermore, we also exploit a couple of other convenient properties of the Hermite functions (namely the Schr\"odinger equation for a harmonic oscillator and a pair of recursive equations), and consequently obtain informative bias bounds and thus informative inference.
With these practical features of our method, simulation studies indeed conclude reasonably fast with informative inference results.
The results evidence the efficacy of the proposed method.
Since Kotlarski's identity is one of the most popular methods in a number of applied fields, including empirical auctions, income dynamics, and labor economics, we hope that our method will contribute to the practice of economic analyses in these and other topics.


{\small
\bibliography{mybib}
}

\newpage
\appendix
\section*{Online Appendix}
\section{Proofs for the Main Theorems}

\subsection{Proof of Theorem \ref{theorem:moment_condition} (Linear Complex-Valued Moment Restrictions)}\label{sec:theorem:moment_condition}
\begin{proof}
By (\ref{eq:main}) and Assumption \ref{a:continuous_independence} (ii), we have
\begin{equation*}
\mathbb{E}_P\left[\exp(it_1Y_1+it_2Y_2)\right]
=
\phi_X(t_1+t_2)\phi_{U_1}(t_1)\phi_{U_2}(t_2)
\end{equation*}
for every $(t_1,t_2) \in \mathbb{R}^2$.
Note that random variables with finite first moments have continuously differentiable characteristic functions.
Since $\phi_{U_1}(0)=1$ and $\phi_{U_1}^{(1)}(0)=0$ by Assumption \ref{a:continuous_independence} (i), it follows that 
\begin{align*}
\mathbb{E}_P\left[\exp(itY_2)\right]
&=
\phi_X(t)\phi_{U_2}(t)
\qquad\text{and}\\
\left. \frac{\partial}{\partial t_1}\mathbb{E}_P\left[\exp(it_1Y_1+itY_2)\right] \right\vert_{t_1=0}
&=
\phi_X^{(1)}(t)\phi_{U_2}(t)
\end{align*}
for every real $t$ under Assumption \ref{a:continuous_independence} (i).
Therefore, 
\begin{align*}
\mathbb{E}_P\left[\left(iY_1\phi_X(t)-\phi_X^{(1)}(t)\right)\exp(itY_2)\right]
&=
\mathbb{E}_P\left[iY_1\exp(itY_2)\right]\phi_X(t)-\mathbb{E}_P\left[\exp(itY_2)\right]\phi_X^{(1)}(t)\\
&=
\frac{\partial}{\partial t_1}\mathbb{E}_P\left[\exp(it_1Y_1+itY_2)\right]\mid_{t_1=0}\phi_X(t)-\mathbb{E}_P\left[\exp(itY_2)\right]\phi_X^{(1)}(t)\\
&=
\phi_X^{(1)}(t)\phi_{U_2}(t)\phi_X(t)-\phi_X(t)\phi_{U_2}(t)\phi_X^{(1)}(t)
=
0
\end{align*}
for every real $t$, and the claim of the lemma follows.
\end{proof}

\subsection{Proof of Theorem \ref{theorem:size_control} (Size Control)}\label{sec:theorem:size_control}
\begin{proof}[Proof of Theorem \ref{theorem:size_control}]
Let $P \in \mathcal{P}$ and $f \in \mathcal{L}_0^\ast(P)$.
We have 
\begin{eqnarray*}
\mathbb{P}_P(f\in \mathcal{C}_{n}(\alpha))
&=&
\mathbb{P}_P(T(\boldsymbol{\theta})\leq c(\alpha,\boldsymbol{\theta})\mbox{ for some }\boldsymbol{\theta}\in\mathbf{B}_{q+1,\eta}(f))\\
&\geq&
\mathbb{P}_P(T(\boldsymbol{\theta}_{\ast\ast})\leq c(\alpha,\boldsymbol{\theta}_{\ast\ast}))\\
\end{eqnarray*}
where $\boldsymbol{\theta}_{\ast\ast} \in \mathbf{B}_{q+1,\eta}(f)$ satisfies
$$
\max_{1\leq l\leq L}\left(\max\{|\mathbb{E}_P[\mathbf{R}_{t_l}]^T\boldsymbol{\theta}_{\ast\ast}|,|\mathbb{E}_P[\mathbf{I}_{t_l}]^T\boldsymbol{\theta}_{\ast\ast}|\}-\delta(t_l)\right) \leq 0
$$
as $f \in \mathcal{L}_0^{\ast}(P)$.
By \citet*[][Theorem A.1]{chernozhukov/chetverikov/kato:2017}, there exist positive constants $c$ and $C$ depending on $c_1$ and $C_1$ under Assumption \ref{a:size_control} (it is not difficult to see that $c$ and $C$ are independent of $\alpha$; see Theorem 4.3 in \cite{chernozhukov/chetverikov/kato:2017}) such that
$$
\mathbb{P}_P(T(\boldsymbol{\theta}_{\ast\ast}) \leq c(\alpha,\boldsymbol{\theta}_{\ast\ast}))\geq 1-\alpha-Cn^{-c}.
$$
Therefore, the statement of the theorem follows. 
\end{proof}

\subsection{Proof of Theorem \ref{theorem:approximation} (Approximation)}\label{sec:theorem:approximation}
\begin{proof}
Let $f\in\mathcal{L}_0(P)$ and $\phi = \mathcal{F} f$.
Assumption \ref{a:orthonormal} implies that
\begin{align*}
f = \sum_{j=0}^{\infty} \langle{ f,\psi_j }\rangle \cdot \psi_j
\qquad\text{and}\qquad
\phi = \sum_{j=0}^{\infty} \langle{ f,\psi_j }\rangle \cdot \phi_j
\end{align*}
-- see \citet[][Theorem 5.27]{folland2007}.
Define 
\begin{align*}
\psi_{0:q}=\sum_{j=0}^q \langle{ f,\psi_j }\rangle \cdot \psi_j 
\qquad\text{and}\qquad
\phi_{0:q}=\sum_{j=0}^q \langle{ f,\psi_j }\rangle \cdot \phi_j.
\end{align*}
Since $f\in\mathcal{L}_0(P)$ and $\phi = \mathcal{F} f$,
\begin{eqnarray*}
\left|\mathbb{E}_P\left[\left(iY_1\phi_{0:q}(t)-\phi_{0:q}^{(1)}(t)\right)\exp(itY_2)\right]\right|
&=& 
\left|\mathbb{E}_P\left[\left(iY_1(\phi_{0:q}(t)-\phi(t))-(\phi_{0:q}^{(1)}(t)-\phi^{(1)}(t))\right)\exp(itY_2)\right]\right|\\
&=& 
\left|\sum_{j=q+1}^{\infty} \langle{ f,\psi_j }\rangle \cdot \mathbb{E}_P\left[\left(iY_1\phi_j(t)-\phi_j^{(1)}(t)\right)\cdot\exp(itY_2)\right]\right|.
\end{eqnarray*}
Similarly,
\begin{eqnarray*}
\sup_{t\in I}|f(t)-\phi_{0:q}(t)|
&\leq&
\sup_{t\in I}\left|\sum_{j=q+1}^{\infty} \langle{ f,\psi_j }\rangle \cdot \psi_j(t)\right|.
\end{eqnarray*}
Therefore, the statement of the theorem follows.
\end{proof}

\subsection{Proof of Theorem \ref{thm_power} (Power)}\label{sec:thm_power}
\begin{proof}
This proof focuses on the case in (\ref{local_alternatves1}). 
The proofs for the cases of (\ref{local_alternatves2})--(\ref{local_alternatves4}) are similar. 
By the definition of $\mathcal{C}_{n}(\alpha)$, we can write
\begin{eqnarray*}
\mathbb{P}_P(f\notin \mathcal{C}_{n}(\alpha))
&=& 
\mathbb{P}_P\left(T(\boldsymbol{\theta})>c(\alpha,\boldsymbol{\theta})\mbox{ for every }\boldsymbol{\theta}\in\mathbf{B}_{q+1,\eta}(f)\right)\\
&\geq& 
\mathbb{P}_P\left(T(\boldsymbol{\theta})>\sqrt{2\log(4L)}+\sqrt{2\log(1/\alpha)}\mbox{ for every }\boldsymbol{\theta}\in\mathbf{B}_{q+1,\eta}(f)\right)\\
&=& 
\mathbb{P}_P\left(\inf_{\boldsymbol{\theta}\in\mathbf{B}_{q+1,\eta}(f)}T(\boldsymbol{\theta})>\sqrt{2\log(4L)}+\sqrt{2\log(1/\alpha)}\right),
\end{eqnarray*}
where the inequality follows from 
$$
c(\alpha,\boldsymbol{\theta})\leq \sqrt{2\log(4L)}+\sqrt{2\log(1/\alpha)}
$$
-- see \citet*[][Lemma D.4]{chernozhukov/chetverikov/kato:2017}.
If $B_{V}\leq \nu$, then  
\begin{eqnarray*}
\inf_{\boldsymbol{\theta}\in\mathbf{B}_{q+1,\eta}(f)}T(\boldsymbol{\theta})
&\geq&
\inf_{\boldsymbol{\theta}\in\mathbf{B}_{q+1,\eta}(f)}\sqrt{n}\frac{\mathbb{E}_n[\mathbf{R}_{t_\ast}]^T\boldsymbol{\theta}-\delta(t_\ast)}{\sqrt{\boldsymbol{\theta}^T\mathbb{V}_n(\mathbf{R}_{t_\ast})\boldsymbol{\theta}}}\\
&\geq&
\inf_{\boldsymbol{\theta}\in\mathbf{B}_{q+1,\eta}(f)}\frac{\mathbb{G}_n[\mathbf{R}_{t_\ast}]^T\boldsymbol{\theta}}{\sqrt{\boldsymbol{\theta}^T\mathbb{V}_n(\mathbf{R}_{t_\ast})\boldsymbol{\theta}}}
+
\inf_{\boldsymbol{\theta}\in\mathbf{B}_{q+1,\eta}(f)}\sqrt{n}\frac{\mathbb{E}_P[\mathbf{R}_{t_\ast}]^T\boldsymbol{\theta}-\delta(t_\ast)}{\sqrt{\boldsymbol{\theta}^T\mathbb{V}_n(\mathbf{R}_{t_\ast})\boldsymbol{\theta}}}\\
&\geq&
-\sup_{\boldsymbol{\theta}\in\mathbf{B}_{q+1,\eta}(f)}\frac{|\mathbb{G}_n[\mathbf{R}_{t_\ast}]^T\boldsymbol{\theta}|}{\sqrt{\boldsymbol{\theta}^T\mathbb{V}_n(\mathbf{R}_{t_\ast})\boldsymbol{\theta}}}
+
\inf_{\boldsymbol{\theta}\in\mathbf{B}_{q+1,\eta}(f)}\sqrt{n}\frac{\mathbb{E}_P[\mathbf{R}_{t_\ast}]^T\boldsymbol{\theta}-\delta(t_\ast)}{\sqrt{\boldsymbol{\theta}^T\mathbb{V}_P(\mathbf{R}_{t_\ast})\boldsymbol{\theta}+\nu}}\\
&\geq&
-\sup_{\boldsymbol{\theta}\in\mathbf{B}_{q+1,\eta}(f)}\frac{|\mathbb{G}_n[\mathbf{R}_{t_\ast}]^T\boldsymbol{\theta}|}{\sqrt{\boldsymbol{\theta}^T\mathbb{V}_n(\mathbf{R}_{t_\ast})\boldsymbol{\theta}}}+\sqrt{2\log(4L)}+\sqrt{2\log(1/\alpha)}\\&&+(1+b) \cdot \mathbb{E}_P\left[\sup_{\boldsymbol{\theta}\in\mathbf{B}_{q+1,\eta}(f)}\frac{|\mathbb{G}_n[\mathbf{R}_{t_\ast}]^T\boldsymbol{\theta}|}{\sqrt{\boldsymbol{\theta}^T\mathbb{V}_n(\mathbf{R}_{t_\ast})\boldsymbol{\theta}}}\right]
\end{eqnarray*}
by (\ref{local_alternatves1}).
Thus, we obtain
\begin{eqnarray*}
\mathbb{P}_P(f\notin \mathcal{C}_{n}(\alpha))
&\geq& 
\mathbb{P}_P\left(\left\{\dfrac{\sup_{\boldsymbol{\theta}\in\mathbf{B}_{q+1,\eta}(f)}\frac{|\mathbb{G}_n[\mathbf{R}_{t_\ast}]^T\boldsymbol{\theta}|}{\sqrt{\boldsymbol{\theta}^T\mathbb{V}_n(\mathbf{R}_{t_\ast})\boldsymbol{\theta}}}}{\mathbb{E}_P\left[\sup_{\boldsymbol{\theta}\in\mathbf{B}_{q+1,\eta}(f)}\frac{|\mathbb{G}_n[\mathbf{R}_{t_\ast}]^T\boldsymbol{\theta}|}{\sqrt{\boldsymbol{\theta}^T\mathbb{V}_n(\mathbf{R}_{t_\ast})\boldsymbol{\theta}}}\right]}
<1+b\right\}\cap \{B_{V}\leq \nu\}\right)\\
&\geq& 
\mathbb{P}_P\left(\dfrac{\sup_{\boldsymbol{\theta}\in\mathbf{B}_{q+1,\eta}(f)}\frac{|\mathbb{G}_n[\mathbf{R}_{t_\ast}]^T\boldsymbol{\theta}|}{\sqrt{\boldsymbol{\theta}^T\mathbb{V}_n(\mathbf{R}_{t_\ast})\boldsymbol{\theta}}}}{\mathbb{E}_P\left[\sup_{\boldsymbol{\theta}\in\mathbf{B}_{q+1,\eta}(f)}\frac{|\mathbb{G}_n[\mathbf{R}_{t_\ast}]^T\boldsymbol{\theta}|}{\sqrt{\boldsymbol{\theta}^T\mathbb{V}_n(\mathbf{R}_{t_\ast})\boldsymbol{\theta}}}\right]}
<1+b\right)-(1-\mathbb{P}_P(B_{V}\leq \nu))\\
&\geq& 
1-\frac{1}{1+b}-(1-\mathbb{P}_P(B_{V}\leq \nu)),
\end{eqnarray*}
where the last inequality is due to Markov's inequality. 
Therefore, the statement of the theorem follows. 
\end{proof}

\section{Identification and Estimation from the Previous Literature}

This appendix section presents the identification and estimation for the characteristic function $\varphi_{X}$ and the density function $f_{X}$ of $X$ based on \cite{li/vuong:1998} and its extensions.
Moreover, a choice of the tuning parameter based on \citep{delaigle/gijbels:2004} is also reviewed.
Although the main text of this paper is focused on inference, one would also want to present estimates along with confidence bands as we presented in Figure \ref{fig:instances}.
This appendix section provides a method of obtaining estimates for convenience of readers.

\subsection{Identification and Estimation of the Characteristic Functions}\label{sec:li_vuong}
For a joint distribution $P$ of $(Y_1,Y_2)$, \citet{li/vuong:1998} show that the characteristic functions of $X$ and $U_1$ are identified by
\begin{align}
\varphi_X(t) &= \exp\left( \int_0^t \frac{i \mathbb{E}_P\left[Y_1 e^{i\tau Y_2}\right]}{\mathbb{E}_P\left[e^{i\tau Y_2}\right]} d\tau \right)
\qquad\text{and}
\label{eq:li_vuong}\\
\varphi_{U_1}(t) &= \frac{\mathbb{E}_P\left[e^{itY_1}\right]}{\exp\left( \int_0^t \frac{i \mathbb{E}_P\left[Y_1 e^{i\tau Y_2}\right]}{\mathbb{E}_P\left[e^{i\tau Y_2}\right]} d\tau \right)}
\notag
\end{align}
respectively, under the assumption of nonvanishing characteristic function of $Y_2$ in addition to Assumption \ref{a:continuous_independence}.
The sample-counterpart estimator of (\ref{eq:li_vuong}) reads
\begin{align}
\widehat\varphi_X(t) &= \exp\left( \int_0^t \frac{i \mathbb{E}_n\left[Y_1 e^{i\tau Y_2}\right]}{\mathbb{E}_n\left[e^{i\tau Y_2}\right]} d\tau \right).
\label{eq:li_vuong_estimator}
\end{align}
Similarly,
$$
\widehat\varphi_{U_1}(t) = \frac{\mathbb{E}_n\left[e^{itY_1}\right]}{\exp\left( \int_0^t \frac{i \mathbb{E}_n\left[Y_1 e^{i\tau Y_2}\right]}{\mathbb{E}_n\left[e^{i\tau Y_2}\right]} d\tau \right)}
\notag.
$$

\subsection{Derivatives of the Characteristic Function}\label{sec:derivative}
The identification and estimation method of \citet{li/vuong:1998} can be extended to the derivatives of the characteristic functions.
Specifically, taking the log derivatives of (\ref{eq:li_vuong}):
$$
\lambda_{X}(t)=\log\varphi_X(t),
$$
we can obtain
$$
\lambda_{X}^{(1)}(t)=\frac{i\mathbb{E}_P\left[Y_1 e^{it Y_2}\right]}{\varphi_{Y_2}(t)}
$$
and hence
$$
\varphi_{Y_2}(t)\lambda_{X}^{(1)}(t)=i\mathbb{E}_P\left[Y_1 e^{it Y_2}\right].
$$
Taking up to the fourth-order derivatives, we obtain
\begin{align*}
\varphi_{Y_2}(t)\lambda_{X}^{(2)}(t)+\varphi_{Y_2}^{(1)}(t)\lambda_{X}^{(1)}(t)
&=
-\mathbb{E}_P\left[Y_1Y_2 e^{it Y_2}\right]
\\
\varphi_{Y_2}(t)\lambda_{X}^{(3)}(t)
+2\varphi_{Y_2}^{(1)}(t)\lambda_{X}^{(2)}(t)
+\varphi_{Y_2}^{(2)}(t)\lambda_{X}^{(1)}(t)
&=
-i\mathbb{E}_P\left[Y_1Y_2^2 e^{it Y_2}\right]
\\
\varphi_{Y_2}(t)\lambda_{X}^{(4)}(t)
+3\varphi_{Y_2}^{(1)}(t)\lambda_{X}^{(3)}(t)
+3\varphi_{Y_2}^{(2)}(t)\lambda_{X}^{(2)}(t)
+\varphi_{Y_2}^{(3)}(t)\lambda_{X}^{(1)}(t)
&=\mathbb{E}_P\left[Y_1Y_2^3 e^{it Y_2}\right]
\end{align*}
We write the above equations as the linear equation:
\begin{align*}
\left(
\begin{array}{cccc}
\varphi_{Y_2}(t)&0&0&0\\
\varphi_{Y_2}^{(1)}(t)&\varphi_{Y_2}(t)&0&0\\
\varphi_{Y_2}^{(2)}(t)&2\varphi_{Y_2}^{(1)}(t)&\varphi_{Y_2}(t)&0\\
\varphi_{Y_2}^{(3)}(t)&3\varphi_{Y_2}^{(2)}(t)&3\varphi_{Y_2}^{(1)}(t)&\varphi_{Y_2}(t)
\end{array}
\right)
\left(
\begin{array}{c}
\lambda_{X}^{(1)}(t)\\
\lambda_{X}^{(2)}(t)\\
\lambda_{X}^{(3)}(t)\\
\lambda_{X}^{(4)}(t)
\end{array}
\right)
&=
\left(
\begin{array}{c}
i\mathbb{E}_P\left[Y_1 e^{it Y_2}\right]\\
-\mathbb{E}_P\left[Y_1Y_2 e^{it Y_2}\right]\\
-i\mathbb{E}_P\left[Y_1Y_2^2 e^{it Y_2}\right]\\
\mathbb{E}_P\left[Y_1Y_2^3 e^{it Y_2}\right]
\end{array}
\right).
\end{align*}
Assuming the invertibility of the $4 \times 4$ matrix on the left-hand side, we explicitly write $(\lambda_{X}^{(1)}(t),\dots,\lambda_{X}^{(4)}(t))'$ in terms of observable moments as
\begin{align*}
\left(
\begin{array}{c}
\lambda_{X}^{(1)}(t)\\
\lambda_{X}^{(2)}(t)\\
\lambda_{X}^{(3)}(t)\\
\lambda_{X}^{(4)}(t)
\end{array}
\right)
&=
\left(
\begin{array}{cccc}
\mathbb{E}_P\left[e^{itY_2}\right]&0&0&0\\
i\mathbb{E}_P\left[Y_2e^{itY_2}\right]&\mathbb{E}_P\left[e^{itY_2}\right]&0&0\\
-\mathbb{E}_P\left[Y_2^2e^{itY_2}\right]&2i\mathbb{E}_P\left[Y_2e^{itY_2}\right]&\mathbb{E}_P\left[e^{itY_2}\right]&0\\
-i\mathbb{E}_P\left[Y_2^3e^{itY_2}\right]&-3\mathbb{E}_P\left[Y_2^2e^{itY_2}\right]&3i\mathbb{E}_P\left[Y_2e^{itY_2}\right]&\mathbb{E}_P\left[e^{itY_2}\right]
\end{array}
\right)^{-1}
\left(
\begin{array}{c}
i\mathbb{E}_P\left[Y_1 e^{it Y_2}\right]\\
-\mathbb{E}_P\left[Y_1Y_2 e^{it Y_2}\right]\\
-i\mathbb{E}_P\left[Y_1Y_2^2 e^{it Y_2}\right]\\
\mathbb{E}_P\left[Y_1Y_2^3 e^{it Y_2}\right]
\end{array}
\right).
\end{align*}

With these $\varphi_{X}(t),\lambda_{X}^{(1)}(t),\dots,\lambda_{X}^{(4)}(t)$ written explicitly written in terms of observable moments, we can in turn identify $\varphi_X^{(1)}(t),\varphi_X^{(2)}(t),\varphi_X^{(3)}(t),\varphi_X^{(4)}(t)$ in terms of observable moments as follows:
$$
\varphi_X^{(1)}(t)=\varphi_X(t)\lambda_{X}^{(1)}(t),
$$
\begin{align*}
\varphi_X^{(2)}(t)
&=
\varphi_X^{(1)}(t)\lambda_{X}^{(1)}(t)+\varphi_X(t)\lambda_{X}^{(2)}(t)\\
&=
\varphi_X(t)(\lambda_{X}^{(1)}(t)^2+\lambda_{X}^{(2)}(t)),
\end{align*}
\begin{align*}
\varphi_X^{(3)}(t)
&=
\varphi_X^{(1)}(t)(\lambda_{X}^{(1)}(t)^2+\lambda_{X}^{(2)}(t))+\varphi_X(t)(2\lambda_{X}^{(1)}(t)\lambda_{X}^{(2)}(t)+\lambda_{X}^{(3)}(t))\\
&=
\varphi_X(t)\left(\lambda_{X}^{(1)}(t)^3+3\lambda_{X}^{(1)}(t)\lambda_{X}^{(2)}(t)+\lambda_{X}^{(3)}(t)\right)
\end{align*}
\begin{align*}
\varphi_X^{(4)}(t)
&=
\varphi_X^{(1)}(t)\left(\lambda_{X}^{(1)}(t)^3+3\lambda_{X}^{(1)}(t)\lambda_{X}^{(2)}(t)+\lambda_{X}^{(3)}(t)\right)
\\&+
\varphi_X(t)\left(3\lambda_{X}^{(1)}(t)^2\lambda_{X}^{(2)}(t)
+3\lambda_{X}^{(1)}(t)\lambda_{X}^{(3)}(t)
+3\lambda_{X}^{(2)}(t)^2
+\lambda_{X}^{(4)}(t)\right)\\
&=
\varphi_X(t)
\left(
\lambda_{X}^{(1)}(t)^4
+6\lambda_{X}^{(1)}(t)^2\lambda_{X}^{(2)}(t)
+4\lambda_{X}^{(1)}(t)\lambda_{X}^{(3)}(t)
+3\lambda_{X}^{(2)}(t)^2
+\lambda_{X}^{(4)}(t)\right)
\end{align*}
Sample counterparts of these derivatives, along with (\ref{eq:li_vuong_estimator}), can be used to estimate $M$.

\subsection{Tuning Parameter}\label{sec:tuning_parameters_li_vuong}
To estimate the probability density function $f_{X}$ of $X$ using the characteristic function estimator (\ref{eq:li_vuong}), we need to impose a regularization by limiting the integration for the Fourier transform to a contact interval $[-h^{-1},h^{-1}]$ for some ``bandwidth'' $h$.
Finite-sample choice methods of choosing the limit frequency $h$ are proposed in the literature of deconvolution kernel density estimation.
One of the most widely used approaches is to minimize the MISE \citep{stefanski/carroll:1990} or its asymptotically dominating part \citep{delaigle/gijbels:2004}:
\begin{equation*}
AMISE(h) = \frac{1}{2\pi nh} \int \left| \frac{\phi_K(t)}{\varphi_{U_1}(t/h)} \right|^2 dt +\frac{h^4}{4}  \int u^2 K(u) du\cdot\int f^{(2)}_{X}(x)^2 dx.
\end{equation*}
where $\varphi_K$, supported on $[-1,1]$, is $\mathcal{F}K$ for some kernel function $K$.

There are alternative ways to compute $\int f^{(2)}_{X}(x)^2 dx$.
Based on Parseval's identity, \citet{delaigle/gijbels:2004} suggest 
\begin{equation*}
\int f^{(2)}_{X}(x)^2 dx = \frac{1}{2\pi h^5} \int t^4 \frac{|\varphi_X(t/h)|^2 |\varphi_K(t)|^2}{|\varphi_{U_1}(t/h)|^2} dt.
\end{equation*}
Combining the above two equations together yields
\begin{equation*}
AMISE(h) = \frac{1}{2\pi nh} \int \left| \frac{\phi_K(t)}{\varphi_{U_1}(t/h)} \right|^2 dt +\frac{1}{8\pi h}  \int u^2 K(u) du \cdot \int t^4 \frac{|\varphi_X(t/h)|^2 |\varphi_K(t)|^2}{|\varphi_{U_1}(t/h)|^2} dt.
\end{equation*}
With this formula, one may choose $h$ to minimize the plug-in counterpart of $AMISE(h)$, replacing the unknown characteristic functions $\varphi_X$ and $\varphi_{U_1}$ by the sample counterparts $\widehat\varphi_X$ and $\widehat\varphi_{U_1}$, respectively, in Appendix \ref{sec:li_vuong}.

Since the set $[-h^{-1}, h^{-1}]$ of frequencies is used for estimation, it is also a natural idea to use this set $[-h^{-1}, h^{-1}]$ of frequencies for inference as well, although our theory for inference does not require such a finite limit unlike the estimation which requires regularization. 

\subsection{Estimation of the Density Function}\label{sec:density}
With the estimated characteristic function (\ref{eq:li_vuong_estimator}) and the bandwidth parameter $h$ chosen in Section \ref{sec:tuning_parameters_li_vuong}, the density function may be estimated by
$$
\widehat f_X(x) = \frac{1}{2\pi} \int e^{-itx} \varphi_K(th) \widehat\varphi_X(t) dt.
$$
The ``Li-Vuong estimates'' shown in Section \ref{sec:simulation} are based on the above formula together with the tuning parameter chosen according to the procedure outlined in Appendix \ref{sec:tuning_parameters_li_vuong}.

\section{Additional Proofs}
\subsection{Proof of Proposition \ref{prop:hermite_assumption} (Sufficient Condition for Assumption \ref{a:orthonormal})}\label{sec:prop:hermite_assumption}
\begin{proof}
First, it follows from \citet[][Theorem 16.3.1]{blanchard/bruening:2002} that $\Psi=\{\psi_j:j=0,1,\ldots\}$ satisfies Assumption \ref{a:orthonormal} (i).
Furthermore, since $|\psi_j| \leq 1.086435\pi^{-1/4}$ for each $j=0,1,\ldots$ \citep[see e.g.,][p. 208]{Erdelyi/etal:1953}, we have
$$
|\langle{ f, \psi_j }\rangle|
\leq
\langle{ f(x), |\psi_j| }\rangle
\leq
1.086435\pi^{-1/4} \cdot \int f(x)dx
=
1.086435\pi^{-1/4}
$$
for each $f \in \mathcal{L}_0(P)$
for each $P \in \mathcal{P}$ and 
for each $j=0,1,\ldots$.
This shows that Assumption \ref{a:orthonormal} (ii) is satisfied with $\Theta^{q+1} = \left[-1.086435\pi^{-1/4},1.086435\pi^{-1/4}\right]^q$.
\end{proof}

\subsection{Proof of Proposition \ref{prop:bounds} (Approximation Bounds)}\label{sec:prop:bounds}
\begin{proof}
First, note that the Hermite function $\psi_j$ in (\ref{eq:hermite_function}) satisfies the Schr\"odinger equation:
\begin{equation}\label{eq:schrodinger}
\psi_j^{(2)}(x) = -(2j+1-x^2) \cdot \psi_j(x)
\end{equation}
for each $j=0,1,\ldots$ \citep[][Theorem 6.14 (6.41)]{folland:2009}.
Second, note that the Hermite functions (\ref{eq:hermite_function}) also satisfy the recurrence relation:
\begin{equation}\label{eq:recurrence}
\psi_j^{(1)} = \sqrt{\frac{j}{2}} \psi_{j-1} - \sqrt{\frac{j+1}{2}} \psi_{j+1}
\end{equation}
for each $j=1,2,\ldots$ \citep[][Theorem 6.14 (6.39)--(6.40)]{folland:2009}.
We will use these properties of the Hermite functions in the proof below.

By Proposition \ref{prop:hermite_assumption}, we can write
$$
f=\sum_{j=0}^{\infty}\langle{f,\psi_j}\rangle\psi_j.
$$
-- see \citet[][Theorem 5.27]{folland2007}.
Taking the second derivatives of the both sides, we obtain
\begin{eqnarray*}
f^{(2)}(x)
=
\sum_{j=0}^{\infty}\langle{f,\psi_j}\rangle\psi_j^{(2)}(x)
=
-\sum_{j=0}^{\infty}\langle{f,\psi_j}\rangle(2j+1-x^2)\psi_j(x),
\end{eqnarray*}
where the second equality is due to (\ref{eq:schrodinger}).
Rearranging, we have
\begin{eqnarray*}
f^{(2)}(x)+x^2f(x)
=
-\sum_{j=0}^{\infty}\langle{f,\psi_j}\rangle(2j+1)\psi_j(x).
\end{eqnarray*}
Further taking the second derivatives of the both sides yields
\begin{align*}
\frac{d^2}{d x^2} (f^{(2)}(x)+x^2f(x))
&=
-\sum_{j=0}^{\infty}\langle{f,\psi_j}\rangle(2j+1)\psi_j^{(2)}(x)\\
&=
\sum_{j=0}^{\infty}\langle{f,\psi_j}\rangle(2j+1)(2j+1-x^2)\psi_j(x),
\end{align*}
where the second equality is again due to (\ref{eq:schrodinger}).
Rearranging terms, we obtain
\begin{equation}\label{eq:hermite_fourth_derivative}
\frac{d^2}{d x^2} (f^{(2)}(x)+x^2f(x)) + x^2 \cdot (f^{(2)}(x)+x^2f(x))
=
\sum_{j=0}^{\infty}\langle{f,\psi_j}\rangle(2j+1)^2\psi_j(x)
\end{equation}

Combining (\ref{eq:hermite_smoothness}) and (\ref{eq:hermite_fourth_derivative}) together, we have $\left\| \sum_{j=0}^{\infty}\langle{f,\psi_j}\rangle(2j+1)^2\psi_j( \cdot ) \right\|_2^2\leq M$,
and hence
$$
\sum_{j=0}^{\infty}(\langle{f,\psi_j}\rangle)^2(2j+1)^4\leq M.
$$
Define $D=\sum_{j=q+1}^{\infty}(2j+1)^{-3}$.
The above inequality implies
$$
D\sum_{j=q+1}^{\infty}\dfrac{(2j+1)^{-3}}{D}(\langle{f,\psi_j}\rangle)^2(2j+1)^{7}\leq M.
$$
Using Jensen's inequality, we can write 
$$
\sqrt{D}\sum_{j=q+1}^{\infty}\dfrac{(2j+1)^{-3}}{D}\sqrt{(\langle{f,\psi_j}\rangle)^2(2j+1)^{7}}\leq\sqrt{D}\sqrt{\sum_{j=q+1}^{\infty}\dfrac{(2j+1)^{-3}}{D}(\langle{f,\psi_j}\rangle)^2(2j+1)^{7}}\leq \sqrt{M},
$$
which implies 
\begin{equation}\label{coeff_decay}
\sum_{j=q+1}^{\infty}\sqrt{2j+1}|\langle{f,\psi_j}\rangle|\leq \sqrt{MD}.
\end{equation}
Thus, we obtain
\begin{align*}
\sup_{f\in\mathcal{L}}\sup_{x\in I}\left|\sum_{j=q+1}^{\infty}\langle{ f,\psi_j }\rangle \cdot \psi_j(x)\right|
&\leq
\left(\sup_{f\in\mathcal{L}}\sum_{j=q+1}^{\infty}\sqrt{2j+1}\left|\langle{ f,\psi_j }\rangle\right|\right) \cdot \sup_{j=q+1,\ldots}\frac{\sup_{x\in I}\left|\psi_j(x)\right|}{\sqrt{2j+1}}\\
&\leq
\frac{1.086435\pi^{-1/4}}{\sqrt{2q+3}}\sqrt{MD}.
\end{align*}

Next, we note that Hermite function is the eigenfunction of the Fourier transform operator.
Specifically, $|\phi_j| =\sqrt{2\pi}|\psi_j|\leq 1.086435\pi^{-1/4}\sqrt{2\pi}$ holds.
Thus, similar lines of calculations to those above yield
\begin{align*}
\sup_{f\in\mathcal{L}}\sup_{t\in \mathbb{R}}\left|\sum_{j=q+1}^{\infty}\langle{ f,\psi_j }\rangle \cdot \phi_j(t)\right|
&\leq
\left(\sup_{f\in\mathcal{L}}\sum_{j=q+1}^{\infty}\sqrt{2j+1}\left|\langle{ f,\psi_j }\rangle\right|\right) \cdot \sup_{j=q+1,\ldots}\frac{\sup_{x\in I}\left|\phi_j(x)\right|}{\sqrt{2j+1}}\\
&\leq
\frac{1.086435\pi^{-1/4}\sqrt{2\pi}}{\sqrt{2q+3}}\sqrt{MD}
\end{align*}

Finally, if $q \in \mathbb{N}$, then we also obtain
\begin{eqnarray*}
&&
\sup_{f\in\mathcal{L}}\left|\sum_{j=q+1}^{\infty} \langle{ f,\psi_j }\rangle \cdot \left(i\phi_j(t)\cdot\mathbb{E}_P\left[Y_1\exp(itY_2)\right]-\phi_j^{(1)}(t)\cdot\mathbb{E}_P\left[\exp(itY_2)\right]\right)\right| \\
&\leq&
\sup_{f\in\mathcal{L}}\sum_{j=q+1}^{\infty}\left|\langle{ f,\psi_j }\rangle \right|\cdot \left(|\phi_j(t)|\cdot\mathbb{E}_P\left[|Y_1|\right]+|\phi_j^{(1)}(t)|\right) \\
&\leq&
\sup_{f\in\mathcal{L}}\sum_{j=q+1}^{\infty} \sqrt{2j+1}\left|\langle{ f,\psi_j }\rangle \right|\cdot \frac{|\phi_j(t)|\cdot\mathbb{E}_P\left[|Y_1|\right]+\sqrt{j/2}|\phi_{j-1}(t)|+\sqrt{(j+1)/2}|\phi_{j+1}(t)|}{\sqrt{2j+1}}\\
&\leq&
\sup_{f\in\mathcal{L}}\sum_{j=q+1}^{\infty}\sqrt{2j+1} \left|\langle{ f,\psi_j }\rangle \right|\cdot 
\sup_{j=q+1,\ldots}\frac{|\phi_j(t)|\cdot\mathbb{E}_P\left[|Y_1|\right]+\sqrt{j/2}|\phi_{j-1}(t)|+\sqrt{(j+1)/2}|\phi_{j+1}(t)|}{\sqrt{2j+1}}\\
&\leq&
\sqrt{MD}\left(\frac{\mathbb{E}_P\left[|Y_1|\right]}{\sqrt{2q+3}}+1\right)\cdot\frac{1.086435\pi^{-1/4}}{\sqrt{2\pi}}
%
%
%
\end{eqnarray*}
where the second inequality is due to (\ref{eq:recurrence}).
This completes a proof of the proposition.
\end{proof}

\subsection{Proof of Proposition \ref{prop:equivalent_smoothness} (Equivalent Smoothness Condition)}\label{sec:prop:equivalent_smoothness}
\begin{proof}
By the Parseval's identity, we can write
$
\int \Xi(x)^2 dx
=
\int \left| \mathcal{F}\Xi(t) \right|^2 dt
$
for $\Xi \in \mathcal{L}^2$.
Therefore, we compute $\int \left| \mathcal{F}\Xi(t) \right|^2 dt$, where
$\Xi(x) = \frac{d^2}{d x^2} (f''(x)+x^2f(x)) + x^2 \cdot (f''(x)+x^2f(x))$.

First, note that we can rewrite
\begin{equation*}
\Xi(x) = f^{(4)}(x) + 2x^2 f^{(2)}(x) + (x^2+4x)f^{(1)}(x) + (x^4+2)f(x).
\end{equation*}
To evaluate $\mathcal{F} \Xi$, we evaluate each of the four terms in the right-hand side as follows:
\begin{align*}
\int e^{itx} f^{(4)}(x) dx
&= t^4 \int e^{itx}f(x) dx,
\\
\int e^{itx} \left( 2x^2 f^{(2)}(x) \right) dx
&= -2t^2 \int x^2 e^{itx}f(x) dx + 8it \int x e^{itx}f(x) dx + 4 \int e^{itx}f(x) dx,
\\
\int e^{itx} \left( (x^2+4x)f^{(1)}(x) \right) dx
&= -it \int x^2 e^{itx}f(x) dx- (4it + 2) \int x e^{itx}f(x) dx - 4 \int e^{itx}f(x) dx,
\\
\int e^{itx} \left( (x^4+2)f(x) \right) dx
&= \int x^4 e^{itx}f(x) dx + 2 \int e^{itx}f(x) dx,
\end{align*}
where the first three equalitie follows through integration by parts under $\lim_{|x| \rightarrow \infty} x^2 f(x) = \lim_{|x| \rightarrow \infty} x^2 f^{(1)}(x) = \lim_{|x| \rightarrow \infty} f^{(2)}(x) = \lim_{|x| \rightarrow \infty} f^{(3)}(x) = 0$. 
Adding these four equations together, we obtain
\begin{align*}
\mathcal{F}\Xi(t) = \int x^4 e^{itx}f(x) dx - (2t^2 + it) \int x^2 e^{itx}f(x) dx -(2 - 4it) \int x e^{itx}f(x) dx + (t^4 + 2) \int e^{itx}f(x) dx.
\end{align*}
Therefore, the statement of the proposition follows.
\end{proof}

\subsection{Derivation of Eq. (\ref{eq:assn2_sufficient}) (A Sufficient Condition for Assumption \ref{a:size_control} (ii))}\label{appendix:proof_assn2_sufficient}
\begin{proof}
Note that 
\begin{eqnarray*}
\left|(\mathbf{R}_{t}-\mathbb{E}_P[\mathbf{R}_{t}])^T\boldsymbol{\theta}\right|
&\leq&
\sup_{\psi=\psi_0,\ldots,\psi_q}\left|{R}_{\psi,t}(Y_1,Y_2)-\mathbb{E}_P[{R}_{\psi,t}(Y_1,Y_2)]\right|\cdot\left\|\boldsymbol{\theta}\right\|\\
&\leq&
\sup_{\psi=\psi_0,\ldots,\psi_q}\left(2|Y_1|+2\mathbb{E}_P[|Y_1|]+4|{\phi}^{(1)}(t)|\right)\cdot\left\|\boldsymbol{\theta}\right\|\\
&=&
\left(2|Y_1|+2\mathbb{E}_P[|Y_1|]+4\sup_{\psi=\psi_0,\ldots,\psi_q}|{\phi}^{(1)}(t)|\right)\cdot\left\|\boldsymbol{\theta}\right\|
\end{eqnarray*}
and 
\begin{eqnarray*}
\left|(\mathbf{I}_{t}-\mathbb{E}_P[\mathbf{I}_{t}])^T\boldsymbol{\theta}\right|
&\leq&
\left(2|Y_1|+2\mathbb{E}_P[|Y_1|]+4\sup_{\psi=\psi_0,\ldots,\psi_q}|{\phi}^{(1)}(t)|\right)\cdot\left\|\boldsymbol{\theta}\right\|,
\end{eqnarray*}
because 
\begin{eqnarray*}
|{R}_{\psi,t}(Y_1,Y_2)|
&=&
\left|-\cos(tY_2)(Y_1\mathrm{Im}({\phi}(t))+\mathrm{Re}({\phi}^{(1)}(t)))-\sin(tY_2)(Y_1\mathrm{Re}({\phi}(t))-\mathrm{Im}({\phi}^{(1)}(t)))\right|\\
&\leq&
|Y_1|+|\mathrm{Re}({\phi}^{(1)}(t))|+|Y_1|+|\mathrm{Im}({\phi}^{(1)}(t))|\\
&\leq&
2|Y_1|+2|{\phi}^{(1)}(t)|.
\end{eqnarray*}
Also note that, for the Hermite functions, we have 
\begin{eqnarray*}
|\psi_j^{(1)}|
&=&
|i^j\sqrt{2\pi}\psi_j^{(1)}|
\\
&=&
\sqrt{2\pi}|\psi_j^{(1)}|
\\
&=&
\sqrt{2\pi}\left|\sqrt{j/2}\psi_{j-1}^{(1)}-\sqrt{(j+1)/2}\psi_{j+1}^{(1)}\right|
\\
&\leq&
\sqrt{2\pi}(\sqrt{j/2}|\psi_{j-1}^{(1)}|+\sqrt{(j+1)/2}|\psi_{j+1}^{(1)}|)
\\
&\leq&
\sqrt{2\pi}(\sqrt{j/2}+\sqrt{(j+1)/2})\times 1.086435\pi^{-1/4}
\\
&\leq&
4\sqrt{j+1}. 
\end{eqnarray*}
Since $\boldsymbol{\theta}^T\mathbb{V}_P(\mathbf{R}_{t_l})\boldsymbol{\theta}\geq\|\boldsymbol{\theta}\|^2\mathrm{eig}_{\min}(\mathbb{V}_P(\mathbf{R}_{t}))$ and $\boldsymbol{\theta}^T\mathbb{V}_P(\mathbf{I}_{t_l})\boldsymbol{\theta}\geq\|\boldsymbol{\theta}\|^2\mathrm{eig}_{\min}(\mathbb{V}_P(\mathbf{I}_{t}))$, 
we have 
\begin{eqnarray*}
M_{L,q,k}(\boldsymbol{\theta},P)
&\leq&
\dfrac{
\max_{1 \leq l \leq L} \max
\left\{\mathbb{E}_P\left[\left|(\mathbf{R}_{t_l}-\mathbb{E}_P[\mathbf{R}_{t_l}])^T\boldsymbol{\theta}\right|^k\right]^{1/k}, \mathbb{E}_P\left[\left|(\mathbf{I}_{t_l}-\mathbb{E}_P[\mathbf{I}_{t_l}])^T\boldsymbol{\theta}\right|^k\right]^{1/k} \right\}
}
{
\sqrt{
\min_{1 \leq l \leq L}\min\left\{
\boldsymbol{\theta}^T\mathbb{V}_P(\mathbf{R}_{t_l})\boldsymbol{\theta},
\boldsymbol{\theta}^T\mathbb{V}_P(\mathbf{I}_{t_l})\boldsymbol{\theta}
\right\}
}
}
\\
&\leq&
\dfrac{
4\mathbb{E}_P\left[|Y_1|^k\right]^{1/k}+4\mathbb{E}_P[|Y_1|]+8\sup_{\psi=\psi_0,\ldots,\psi_q}|{\phi}^{(1)}(t)|
}
{
\sqrt{
\min_{t\in[-T,T]}\min\left\{
\mathrm{eig}_{\min}(\mathbb{V}_P(\mathbf{R}_{t})),
\mathrm{eig}_{\min}(\mathbb{V}_P(\mathbf{I}_{t}))
\right\}
}
}\\
&\leq&
\dfrac{
4\mathbb{E}_P\left[|Y_1|^k\right]^{1/k}+4\mathbb{E}_P[|Y_1|]+32\sqrt{q+1}
}
{
\sqrt{
\min_{t\in[-T,T]}\min\left\{
\mathrm{eig}_{\min}(\mathbb{V}_P(\mathbf{R}_{t})),
\mathrm{eig}_{\min}(\mathbb{V}_P(\mathbf{I}_{t}))
\right\}
}
}\\
\end{eqnarray*}
and
\begin{eqnarray*}
B_{L,q}(\boldsymbol{\theta},P) 
&\leq& 
\dfrac{
\mathbb{E}_P\left[\max_{1 \leq l \leq L} \max
\left\{\left|(\mathbf{R}_{t_l}-\mathbb{E}_P[\mathbf{R}_{t_l}])^T\boldsymbol{\theta}\right|^4 , \left|(\mathbf{I}_{t_l}-\mathbb{E}_P[\mathbf{I}_{t_l}])^T\boldsymbol{\theta}\right|^4 \right\} \right]^{1/4}
}
{
\sqrt{
\min_{1 \leq l \leq L}\min\left\{
\boldsymbol{\theta}^T\mathbb{V}_P(\mathbf{R}_{t_l})\boldsymbol{\theta},
\boldsymbol{\theta}^T\mathbb{V}_P(\mathbf{I}_{t_l})\boldsymbol{\theta}
\right\}
}
}
\\
&\leq& 
\dfrac{
4\mathbb{E}_P\left[|Y_1|^4\right]^{1/4}+4\mathbb{E}_P[|Y_1|]+8\sup_{\psi=\psi_0,\ldots,\psi_q}|{\phi}^{(1)}(t)|
}
{\sqrt{
\min_{t\in[-T,T]}\min\left\{
\mathrm{eig}_{\min}(\mathbb{V}_P(\mathbf{R}_{t})),
\mathrm{eig}_{\min}(\mathbb{V}_P(\mathbf{I}_{t}))
\right\}
}
}\\
&\leq& 
\dfrac{
4\mathbb{E}_P\left[|Y_1|^4\right]^{1/4}+4\mathbb{E}_P[|Y_1|]+32\sqrt{q+1}
}
{\sqrt{
\min_{t\in[-T,T]}\min\left\{
\mathrm{eig}_{\min}(\mathbb{V}_P(\mathbf{R}_{t})),
\mathrm{eig}_{\min}(\mathbb{V}_P(\mathbf{I}_{t}))
\right\}
}
}\\
\end{eqnarray*}
Since $\mathbb{E}_P\left[Y_1^4\right]<\infty$, 
we have 
\begin{eqnarray*}
&&\left(M_{L,q,3}^3(\boldsymbol{\theta},P) \vee M_{L,q,4}^2(\boldsymbol{\theta},P) \vee B_{L,q}(\boldsymbol{\theta},P) \right)^2
\\
&&=
O\left(\left(\frac{\sqrt{q+1}
}
{\sqrt{
\min_{t\in[-T,T]}\min\left\{
\mathrm{eig}_{\min}(\mathbb{V}_P(\mathbf{R}_{t})),
\mathrm{eig}_{\min}(\mathbb{V}_P(\mathbf{I}_{t}))
\right\}
}
}\right)^6\right)
\\
&&=
O(n^{1/2-c_1}\log^{-7/2}(4Ln)).
\end{eqnarray*}
\end{proof}

\end{document}